\newtheorem{thm}{Theorem}
\newtheorem{lem}[thm]{Lemma}
\newtheorem{cor}[thm]{Corollary}
\newtheorem{exmp}[thm]{Example }
\newtheorem{defn}[thm]{Definition }
\newcommand{\lfam}{\mathscr{L}}
\newcommand{\subtext}[1]{\textnormal{\scriptsize #1}}
\newcommand{\tdha}[1]{\textrm{2DFA}(#1)}
\newcommand{\odha}[1]{\textrm{1DFA}(#1)}
\newcommand{\tnha}[1]{\textrm{2NFA}(#1)}
\newcommand{\onha}[1]{\textrm{1NFA}(#1)}
\newcommand{\tdidha}[1]{\textrm{2DiDFA}(#1)}
\newcommand{\odidha}[1]{\textrm{1DiDFA}(#1)}
\newcommand{\tdinha}[1]{\textrm{2DiNFA}(#1)}
\newcommand{\odinha}[1]{\textrm{1DiNFA}(#1)}
\newcommand{\tdpa}[1]{\textrm{2DPA}(#1)}
\newcommand{\tdbca}[1]{\textrm{2DBCA}(#1)}
\newcommand{\tdbcfa}[1]{\textrm{2DBCFA}(#1)}
\newcommand{\pcfa}{\textrm{PCFA}}
\newcommand{\dpcfa}{\textrm{DPCFA}}
\newcommand{\drpcfa}{\textrm{DRPCFA}}
\newcommand{\cpcfa}{\textrm{CPCFA}}
\newcommand{\dcpcfa}{\textrm{DCPCFA}}
\newcommand{\drcpcfa}{\textrm{DRCPCFA}}
\newcommand{\reg}{\textrm{REG}}
\newcommand{\leftend}{\mathord{\vartriangleright}}
\newcommand{\rightend}{\mathord{\vartriangleleft}}
\newcommand{\valc}{\textrm{VALC}}
\newcommand{\invalc}{\textrm{INVALC}}
\newcommand{\dollar}{\texttt{\$}}
\title{Multi-Head Finite Automata: Characterizations, Concepts and Open Problems}
\author{Markus Holzer
    \email{holzer@informatik.uni-giessen.de}
    \and
    Martin Kutrib
    \email{kutrib@informatik.uni-giessen.de}
    \and Andreas Malcher 
    \email{malcher@informatik.uni-giessen.de}
    \institute{Institut f\"ur Informatik, Universit\"at Giessen,\\
    Arndtstra{\ss}e~2, 35392 Giessen, Germany}
}
\begin{document}
\maketitle

\begin{abstract}
  Multi-head finite automata were introduced in~\cite{Rabin:1964:fadp}
  and~\cite{Rosenberg:1966:mhfa}. Since that time, a vast literature on
  computational and descriptional complexity issues on multi-head
  finite automata documenting the importance of these devices
  has been developed. Although multi-head finite automata are a simple concept,
  their computational behavior can be already very complex and leads
  to undecidable or even non-semi-decidable problems on these devices
  such as, for example, emptiness, finiteness, universality, equivalence,
  etc.  These strong negative results trigger the study of subclasses
  and alternative characterizations of multi-head finite automata
  for a better understanding of the nature of non-recursive trade-offs and, thus,
  the borderline between decidable and undecidable problems.  In
  the present paper, we tour a fragment of this literature.
\end{abstract}

%\begin{keyword}
%multi-head finite automata \sep descriptional complexity \sep computational
%complexity
%\end{keyword}

\section{Introduction}
\label{sec:intro}

\noindent
Languages accepted by multi-tape or multi-head finite automata were
introduced in~\cite{Rabin:1964:fadp} and~\cite{Rosenberg:1966:mhfa}.
Since that time, many restrictions and generalizations of the original
models were investigated and studied (see, e.g.,~\cite{Wagner:1986:CC}). 
Two-way deterministic and nondeterministic
\emph{multi-head} finite automata are probably best known to
characterize the complexity classes of deterministic and nondeterministic
logarithmic space. In fact, in~\cite{Hartmanis:1972:ndscd} it was
shown that the question of the equality of deterministic and
nondeterministic logarithmic space is equivalent to the question of
whether every language accepted by some nondeterministic two-way
three-head finite automaton is accepted by some deterministic two-way
multi-head finite automaton.
Later this result was improved in \cite{sudborough:1975:tbccmhfa}.
Deterministic and nondeterministic one- and two-way multi-head
finite automata induce strict hierarchies of language families
with respect to the number of heads~\cite{monien:1980:twmhaola,Yao:1978:kpobtk}.  

Although multi-head finite automata are very simple devices, their
computational behavior is already highly complex. But what about the
size of multi-head finite automata opposed to their computational
power? Questions on the economics of description size were already
investigated in the early days of theoretical computer science and
build a cornerstone of descriptional complexity
theory~\cite{Meyer:1971:edagfs,Rabin:1959:fadp,Stearns:1967:cca}.  In
terms of descriptional complexity, a known upper bound for the
trade-off between different descriptional systems answers the
question, how succinctly a language can be represented by a descriptor
of one descriptional system compared to an equivalent description of
another descriptional system.  When dealing with more complicated
devices such as, for example, pushdown automata that accept regular
languages, a qualitative phenomenon revealed, namely that of
non-recursive trade-offs. There the gain in economy of description can
be arbitrary, that is, there are no recursive functions serving as
upper bounds for the trade-off. This phenomenon was first discovered
in~\cite{Meyer:1971:edagfs}, and in the meantime, a lot of deep and
interesting results on non-recursive trade-offs have been found for
powerful enough computational devices almost everywhere (see,
e.g.,~\cite{Goldstine:2002:dcmlr,Hartmanis:1980:sdrl:art,Hartmanis:1983:gsuslr,
  kutrib:2005:dphcp:art,kutrib:2005:pnrto:art,Schmidt:1977:sducfl,Valiant:1976:nsddl}).

Concerning descriptional complexity in connection with multi-head
finite automata, in~\cite{kutrib:2005:pnrto:art} it is shown that the
simulation of $\onha{2}$ by $\odha{k}$, for $k\geq 2$, causes
non-recursive trade-offs.  Similar results hold for two-way devices as
well~\cite{Kapoutsis:2005:kpotkdtonr}.  An immediate consequence from
the proofs of these results is that almost all of the aforementioned
elementary questions become undecidable---in fact they are even shown
to be not semi-decidable.  Furthermore, because of these non-recursive
trade-offs pumping lemmas and minimization algorithms for the automata
in question do not exist.
These strong negative results trigger the study of subclasses of
multi-head finite automata for a better understanding of the nature of
non-recursive trade-offs and, thus, the borderline between decidable
and undecidable problems. From the legion of possible research
directions we focus on three alternative and intermediate
computational models, namely (i) multi-head automata accepting bounded
languages, (ii) data-independent or oblivious multi-head finite
automata, and (iii) parallel communicating finite automata.  While the
former research on bounded languages dates back
to~\cite{Ginsburg:1966:MTCFL}, the latter two topics were recently
investigated in~\cite{holzer:1998:diddcmha:diss,holzer:2002:mhfadidd}
and~\cite{bordihn:2008:ccpcfa}. In fact, for some of these models,
some of the aforementioned elementary questions turn out to become
decidable, while for others undecidability remains.  At this point it
is worth mentioning, that recently it was shown that even stateless
one-way multi-head finite automata have a non-decidable emptiness
problem~\cite{Ibarra:2007:smahep:tr}. In fact, these devices are the
most simple one, since they have one internal state only.

In the present paper we tour a fragment of the literature on
computational and descriptional complexity issues of multi-head finite
automata. It obviously lacks completeness, as one falls short of
exhausting the large selection of multi-head finite automata related
problems considered in the literature.  
We give our view of what
constitute the most recent interesting links to the considered problem
areas.

\section{Multi-Head Finite Automata}
\label{sec:multi-head-automata}

\noindent
We denote the set of non-negative integers by $\mathbb{N}$. 
We write $\Sigma^*$ for the set of all words over the finite alphabet $\Sigma$.
The empty word is denoted by~$\lambda$, and 
$\Sigma^+ = \Sigma^* \setminus \{\lambda\}$. The reversal of a word $w$ is
denoted by~$w^R$ and for the length of $w$ we write~$|w|$. 
We use $\subseteq$ for inclusions and~$\subset$ for strict inclusions. 
We write~$2^{S}$ for the powerset of a set $S$.

Let $k \geq 1$ be a natural number. A two-way $k$-head finite
automaton is a finite automaton having a single read-only input tape whose
inscription is the input word in between two endmarkers. The $k$ heads of the
automaton can move freely on the tape but not beyond the endmarkers.
A formal definition is:
\begin{defn} A \emph{nondeterministic two-way $k$-head finite
automaton $(\tnha{k})$} is a system\linebreak $\langle S,A,k,\delta,
\leftend,\rightend,s_0,F\rangle$, where
$S$ is the finite set of \emph{internal states}, 
$A$ is the set of \emph{input symbols}, 
$k\geq 1$ is the \emph{number of heads}, 
$\leftend \notin A$ and
          $\rightend\notin A$ are the \emph{left and right endmarkers}, 
$s_0 \in S$ is the \emph{initial state}, 
$F\subseteq S$ is the set of \emph{accepting states}, an 
$\delta$ is the partial transition function mapping 
$S \times (A \cup \{\leftend, \rightend\})^k$ into the subsets of
$S \times \{-1,0,1\}^k$, where $1$ means to move the head one
square to the right, $-1$ means to move it one square to the left,
and $0$ means to keep the head on the current square. Whenever  
$(s', (d_1,\dots,d_k)) \in \delta (s, (a_1, \dots, a_k))$ is defined, then $d_i
\in \{0,1\}$ if $a_i = \leftend$, and $d_i \in \{-1,0\}$ if
$a_i = \rightend$, $1\leq i \leq k$.
\end{defn}

A $\tnha{k}$ starts with all of its heads on the first square of the tape. It
halts when the transition function is not defined for the current situation. 
A \emph{configuration} of a $\tnha{k}$ $M=\langle S,A,k,\delta,\leftend,
\rightend, s_0,F\rangle$ at some time~$t$ with $t \geq 0$ is a
triple $c_t=(w,s,p)$, where $w$ is the input, $s \in S$ is the current state, 
and $p= (p_1,\dots, p_k) \in \{0,\dots,|w|+1\}^k$ gives the current head
positions.  If a position $p_i$ is $0$, then head $i$ is scanning the
symbol~$\leftend$, if it is $n+1$, then the head is
scanning the symbol $\rightend$. The initial configuration for 
input $w$ is set to $(w, s_0, (1,\dots,1))$.  During its course of 
computation, $M$ runs through a
sequence of configurations. One step from a configuration to its successor
configuration is denoted by~$\vdash$.
Let $w= a_1 a_2\ldots a_n$ be the input, $a_0= \leftend$, and
$a_{n+1}= \rightend$, then we set $ (w,s,(p_1,\dots, p_k))
\vdash$ \linebreak $(w,s',(p_1 +d_1,\dots,p_k +d_k)) $ if and only if 
$(s', (d_1,\dots, d_k))\in \delta(s,(a_{p_1},\dots,a_{p_k}))$.  
As usual we define the
reflexive, transitive closure of $\vdash$ by $\vdash^*$.  Note, that due to
the restriction of the transition function, the heads cannot move beyond the
endmarkers.

The language accepted by a $\tnha{k}$ is precisely the set of words
$w$ such that there is some computation beginning with $\leftend
w\rightend$ on the input tape and ending with the $\tnha{k}$ halting
in an accepting state, i.e.,
$
L(M) = \{\,w \in A^* \mid (w, s_0, (1,\dots,1)) \vdash^*
(w,s,(p_1,\dots,p_k)), s \in F,
\mbox{ and } M \mbox{ halts in } (w,s, (p_1,\dots,p_k))\,\}
$.
If in any case $\delta$ is either undefined or a singleton, then 
the  $k$-head finite automaton is said to be \emph{deterministic}.
Deterministic two-way $k$-head finite automata are denoted by $\tdha{k}$. 
If the heads never move to the left, then the  $k$-head finite 
automaton is said to be \emph{one-way}. Nondeterministic and
deterministic one-way $k$-head finite automata are denoted by 
$\onha{k}$ and $\odha{k}$.
The family of all languages accepted by a device of some type $X$ is denoted by
$\lfam(X)$.  

The power of multi-head finite automata is well studied in the
literature. For one-head machines we obtain 
a characterization of the family of regular languages $\reg$. 
A natural question is to what extent the computational power depends
on the number of heads. For (one-way) automata the proper inclusion 
$\lfam(\onha{1})\subset\lfam(\odha{2})$ is evident. An early result 
is the inclusion which separates the next level, that is, 
$\lfam(\odha{2})\subset\lfam(\odha{3})$~\cite{ibarra:1974:thv2hfa}.
The breakthrough occurred in~\cite{Yao:1978:kpobtk}, where it was shown
that the language
$
L_n = \{\,w_1 \dollar w_2 \dollar \ldots \dollar w_{2n} \mid w_i \in \{a,b\}^*
\mbox{ and } w_i =w_{2n+1-i} \mbox{ for } 1\leq i \leq n\,\}
$
can be used to separate the computational power of automata with~$k+1$ heads
from those with~$k$ heads in the one-way setting:
\begin{thm} 
Let $k\geq 1$. 
Then $\lfam(\odha{k})\subset\lfam(\odha{k+1})$ and 
$\lfam(\onha{k})\subset\lfam(\onha{k+1})$.
\end{thm}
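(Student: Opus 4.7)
The plan is to use the language $L_k$ defined just above the theorem as a single witness that separates $k$-head from $(k+1)$-head one-way finite automata in both the deterministic and the nondeterministic setting. I split the argument into an upper bound (membership of $L_k$ in $\lfam(\odha{k+1})$) and a lower bound (non-membership of $L_k$ in $\lfam(\onha{k})$), and then combine them with the trivial inclusions.

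For the upper bound, I would design a $\odha{k+1}$ that accepts $L_k$ in a single left-to-right sweep. The heads are scheduled in $k$ phases. In phase $i$ two of the heads move in lockstep through the blocks $w_i$ and $w_{2k+1-i}$ and compare them symbol by symbol, while the remaining heads are parked on specific $\dollar$-separators so that at the end of phase $i$ one of them is already positioned at the start of the next pair, giving the initial alignment for phase $i+1$. A single extra head beyond $k$ is exactly what allows such an overlap: with only $k$ heads one can align at most $k$ positions simultaneously, whereas checking $k$ nested palindrome-style pairs while preserving one-way access requires $k+1$ simultaneous markers. The control is finite-state, since only the current phase number and bounded positional information have to be remembered.

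The lower bound $L_k \notin \lfam(\onha{k})$ is the main obstacle and I would adapt the counting argument from Yao. Assume a $\onha{k}$ $M$ with $s$ states accepts $L_k$, and restrict attention to inputs $w_1 \dollar \cdots \dollar w_{2k}$ with every $w_i \in \{a,b\}^N$. At the midpoint separator between $w_k$ and $w_{k+1}$, each accepting computation can be summarized by a \emph{crossing profile} that records, for every head, the order, the state, and the nondeterministic choices with which it passes the cut. The number of such profiles is polynomial in $N$, while the number of distinct legal first halves is $2^{kN}$. A pigeonhole step then produces two different first halves $(w_1,\dots,w_k)$ and $(w'_1,\dots,w'_k)$ whose accepting profiles coincide. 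Splicing the second half matching the first onto the first half matching the second yields an input outside $L_k$ along which $M$ still has an accepting computation, a contradiction.

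Combining the two bounds finishes the proof. The trivial containment $\lfam(\odha{k}) \subseteq \lfam(\onha{k})$ lifts the lower bound to $L_k \notin \lfam(\odha{k})$, and the upper bound gives $L_k \in \lfam(\odha{k+1}) \subseteq \lfam(\onha{k+1})$. Together with the obvious ``ignore the extra head'' inclusions $\lfam(\odha{k}) \subseteq \lfam(\odha{k+1})$ and $\lfam(\onha{k}) \subseteq \lfam(\onha{k+1})$, this yields both strict inclusions claimed. The hardest step is clearly the nondeterministic lower bound: turning the intuition ``$k$ heads cannot juggle $k+1$ palindrome-like checks at once'' into a counting bound that survives nondeterminism is the technical heart of the argument.
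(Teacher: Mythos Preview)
Your proposal has a concrete gap: the witness language is chosen wrongly. The Yao--Rivest result quoted later in the paper says that $L_n$ is accepted by a one-way $k$-head finite automaton (deterministic or nondeterministic) if and only if $n\le\binom{k}{2}$. You instantiate $n=k$, but for every $k\ge 3$ one has $k\le\binom{k}{2}$, so $L_k\in\lfam(\odha{k})$ and your lower-bound claim $L_k\notin\lfam(\onha{k})$ is simply false from $k=3$ on. The correct separating language is $L_{\binom{k}{2}+1}$ (or any $L_n$ with $\binom{k}{2}<n\le\binom{k+1}{2}$). The intuition ``$k$ heads can only mark $k$ positions at once, so they cannot check $k$ pairs'' is misleading: in a one-way scan each head can participate in several successive comparisons, and the construction sketched in the paper's Lemma~\ref{lem:positive-hierarchy} shows exactly how $k$ heads handle $\binom{k}{2}$ pairs by reusing them in $k-1$ rounds of $k-1,k-2,\dots,1$ comparisons.

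Your lower-bound sketch also does not reflect the actual Yao--Rivest argument. A single cut at the midpoint with a ``crossing profile'' does not cleanly split a multi-head one-way computation: between the moments at which different heads cross the cut, the machine is reading symbols on both sides simultaneously, so the computation on the second half is not determined by a profile of polynomial size in the way you suggest. The real argument is combinatorial on \emph{pairs of heads}: there are only $\binom{k}{2}$ unordered pairs of heads, and one shows that if $n>\binom{k}{2}$ then for some index $i$ no two heads are ever positioned inside $w_i$ and $w_{2n+1-i}$ at the same time, which lets one vary $w_i$ without the automaton noticing. Rewriting your plan with the correct witness $L_{\binom{k}{2}+1}$ and this pair-of-heads pigeonhole is what is needed.
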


By exploiting the same language, the computational power of nondeterministic
classes could be separated from the power of deterministic classes. To this
end, for any $n$ the complement of $L_n$ was shown to be accepted by some one-way two-head
nondeterministic finite automaton. Since the deterministic language families 
$\lfam(\odha{k})$ are closed under complementation, it follows that 
the inclusions $\lfam(\odha{k})\subset\lfam(\onha{k})$ are
proper, for all $k\geq 2$.
In order to compare one- and two-way multi-head finite automata classes,
let $L=\{\,w\mid w\in\{a,b\}^* \mbox{ and } w=w^R\,\}$ be the 
mirror language.  It is well known that the mirror language is not accepted
by any $\onha{k}$, but its complement belongs to $\lfam(\onha{2})$.
The next corollary summarizes the inclusions.
\begin{sloppypar}
\begin{cor}
  Let $k\geq 2$. 
Then
  $\lfam(\odha{k})\subset\lfam(\tdha{k})$,
  $\lfam(\odha{k})\subset\lfam(\onha{k})$, and 
  $\lfam(\onha{k})\subset\lfam(\tnha{k})$.
\end{cor}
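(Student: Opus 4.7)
My plan is to reduce each of the three statements to exhibiting a single separating language, since the non-strict containment $\subseteq$ in each pair is immediate from the definitions: every deterministic device is a (single-valued) nondeterministic one, and every one-way device is a two-way device whose transition function happens never to prescribe a $-1$ on any head. Thus $\lfam(\odha{k})\subseteq\lfam(\tdha{k})$, $\lfam(\odha{k})\subseteq\lfam(\onha{k})$, and $\lfam(\onha{k})\subseteq\lfam(\tnha{k})$ hold by inspection, and all the work is in proving strictness.

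For the middle inclusion $\lfam(\odha{k})\subset\lfam(\onha{k})$, I would simply reuse the argument sketched immediately before the corollary. Fixing $k\geq 2$ and choosing $n$ large enough that $L_n\notin\lfam(\odha{k})$ (possible by the hierarchy theorem just stated), the complement of $L_n$ lies in $\lfam(\onha{2})\subseteq\lfam(\onha{k})$, but it cannot lie in $\lfam(\odha{k})$, which is closed under complement. So the complement of $L_n$ is the required witness.

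For the remaining two strict inclusions, I would use the mirror language $L=\{\,w\in\{a,b\}^*\mid w=w^R\,\}$ as a common witness. A $\tdha{2}$ accepts $L$ by a standard two-head sweep: park one head at $\leftend$, walk the other to $\rightend$, and then advance the first head rightward and the second head leftward in lockstep, comparing scanned symbols at each step and accepting iff every comparison succeeds and the heads meet in the middle. Endowing this machine with $k-2$ stationary extra heads gives $L\in\lfam(\tdha{k})\subseteq\lfam(\tnha{k})$ for every $k\geq 2$. Since the preceding text notes that $L\notin\lfam(\onha{k})$ for any $k$, it follows that $L\notin\lfam(\odha{k})\cup\lfam(\onha{k})$, and hence both $\lfam(\odha{k})\subset\lfam(\tdha{k})$ and $\lfam(\onha{k})\subset\lfam(\tnha{k})$ follow at once.

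The only genuinely nontrivial ingredient is the cited fact $L\notin\lfam(\onha{k})$ for every $k$; its standard proof uses a crossing-sequence or information-theoretic counting argument to show that one-way heads cannot reconcile the left-to-right and right-to-left views of a palindrome, no matter how many heads are available. Since the surrounding text takes this as well known, I would cite it rather than reprove it; every other ingredient of the plan is either definitional or the explicit two-head construction given above.
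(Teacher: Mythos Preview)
Your proposal is correct and follows essentially the same approach as the paper: the middle strict inclusion is witnessed by the complement of a suitable $L_n$ via closure of $\lfam(\odha{k})$ under complementation, and the two one-way versus two-way strict inclusions are witnessed by the mirror language, relying on the cited fact that it lies outside $\lfam(\onha{k})$ for every~$k$. The paper states the corollary as a summary of the preceding discussion without spelling out the $\tdha{2}$ construction for palindromes; your explicit two-head sweep simply fills in that routine detail.
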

\end{sloppypar}
From the complexity point of view, the
two-way case is the more interesting one, since there is the following strong relation
to the computational complexity classes $\mathsf{L}=\mathsf{DSPACE}(\log(n))$ and
$\mathsf{NL}=\mathsf{NSPACE}(\log(n))$~\cite{Hartmanis:1972:ndscd}.
\begin{thm}
  $\mathsf{L} = \bigcup_{k\geq 1}\lfam(\tdha{k})$ and $\mathsf{NL} =
  \bigcup_{k\geq 1}\lfam(\tnha{k})$.
\end{thm}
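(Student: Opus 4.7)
The plan is to prove both equalities by showing the two inclusions separately, handling the deterministic and nondeterministic cases by the same argument (replacing ``deterministic'' by ``nondeterministic'' throughout).

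For the easy direction $\bigcup_{k\geq 1}\lfam(\tdha{k}) \subseteq \mathsf{L}$, I would take a $\tdha{k}$ $M$ and describe a logspace TM $T$ that simulates it. On input $w$ of length $n$, $T$ stores on its work tape the current state of $M$ and the $k$ head positions $p_1,\dots,p_k \in \{0,\dots,n+1\}$. Since $k$ is a constant, this uses $k\cdot\lceil\log(n+2)\rceil + O(1) = O(\log n)$ space. To simulate one step, $T$ fetches the input symbols $a_{p_1},\dots,a_{p_k}$ by repositioning its read-only input head, consults the (hard-coded) transition table of $M$, and updates state and head counters accordingly. Acceptance of $M$ is detected when $\delta$ is undefined in an accepting state. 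The same construction works verbatim for $\tnha{k}$ versus $\mathsf{NL}$.

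For the nontrivial direction $\mathsf{L} \subseteq \bigcup_{k\geq 1}\lfam(\tdha{k})$, let $T$ be a deterministic TM deciding $L$ in space $c\log n$ on inputs of length $n$. A configuration of $T$ consists of a state (constant information), the work tape contents, the work tape head position, and the input head position. Writing $s$ for the work alphabet size, the work tape has at most $s^{c\log n} = n^{c\log s}$ possible contents, hence can be encoded as an integer in $\{0,\dots,n^{d}-1\}$ for some constant $d$. I would encode this integer in base $n+2$, storing each of its $d$ digits as the position of a dedicated head on the input tape. The work tape head position ($O(\log n)$ bits) is stored similarly using $O(1)$ heads, and the input head of $T$ is directly realized by one head of the simulating $\tdha{k}$.

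The heart of the simulation is the implementation of one step of $T$ by a constant-length subroutine of the $\tdha{k}$. This requires (i) extracting the current work tape symbol, i.e.\ the digit at the position encoded by the work-tape-head heads, (ii) updating one digit of the encoded work tape contents, and (iii) incrementing or decrementing an encoded number with carry propagation across the digit-heads. All of these are counter-style manipulations of head positions: a few auxiliary heads suffice as scratch counters to compare positions, sweep left or right while another head stays fixed, and perform base-$(n+2)$ arithmetic against the input-tape-length ``modulus.'' The main obstacle, and the step that requires the most care, is exactly this arithmetic subroutine, and in particular ensuring that digit extraction and carry propagation can be carried out deterministically with a constant number of heads and without corrupting the stored configuration; the standard trick is to use one or two ``work heads'' that are repeatedly reset against a fixed ``anchor'' head. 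Once this machinery is in place, the $\tdha{k}$ begins in the configuration corresponding to $T$'s initial configuration, runs the step-simulation until $T$ halts, and accepts iff $T$ accepts. The same construction, with nondeterministic transitions, yields the inclusion $\mathsf{NL} \subseteq \bigcup_{k\geq 1}\lfam(\tnha{k})$, completing both equalities.
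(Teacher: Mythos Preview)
The paper does not actually prove this theorem; it states it as a known result with a citation to Hartmanis~\cite{Hartmanis:1972:ndscd}. Your sketch reproduces the standard Hartmanis argument and is correct in outline: the easy direction is exactly as you describe, and for the hard direction the idea of encoding the $O(\log n)$-cell work tape as an integer polynomially bounded in~$n$, then distributing its base-$(n{+}2)$ digits across a constant number of head positions, is the classical construction.

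One point to sharpen: in your step~(i), the ``digit at the position encoded by the work-tape-head heads'' that you need is a digit of the \emph{base-$s$} representation of the content integer (one digit per tape cell), not a digit of its base-$(n{+}2)$ storage representation. Extracting it therefore requires computing $s^h$ (with $h$ the work-head position) and performing a division/remainder pass through the head-encoded integer; likewise, step~(ii) changes one base-$s$ digit and must push the update back through the base-$(n{+}2)$ encoding. You gesture at this under ``base-$(n{+}2)$ arithmetic,'' but making the base conversion explicit is worthwhile, since it is exactly where the auxiliary scratch heads are consumed and where a careless write-up can go wrong.
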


Concerning a head hierarchy of \emph{two-way} multi-head finite automata, 
in~\cite{monien:1980:twmhaola} it was shown that $k+1$ heads are better than~$k$.
Moreover, the witness languages are \emph{unary}.
\begin{thm}
Let $k\geq 1$. % be a constant. 
Then there are unary languages that 
show the inclusions
$\lfam(\tdha{k})\subset\lfam(\tdha{k+1})$ and 
$\lfam(\tnha{k})\subset\lfam(\tnha{k+1})$.
\end{thm}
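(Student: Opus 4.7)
The plan is to exhibit, for each $k\ge 1$, a unary witness language $L_k\subseteq\{a\}^*$ that lies in $\lfam(\tdha{k+1})$ but not in $\lfam(\tnha{k})$; a single such $L_k$ yields both strict inclusions at once, because $\lfam(\tdha{k})\subseteq\lfam(\tnha{k})$ and $\lfam(\tdha{k+1})\subseteq\lfam(\tnha{k+1})$ hold trivially. A natural candidate, following Monien, is a length set whose membership test requires the joint cooperation of $k+1$ distinct positions on the tape, for instance the set of perfect $(k+1)$-th powers or $\{a^n : n = m_1 m_2 \cdots m_{k+1} \text{ with each } m_i\ge 2\}$, i.e.\ a language encoding a multiplicative condition of arithmetic depth $k+1$.

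For the upper bound $L_k\in\lfam(\tdha{k+1})$, I would design a deterministic automaton in which the $k+1$ heads play complementary roles in a cascade: some heads are used to lay off candidate factors on the tape while others perform coordinated sweeps verifying that the factors multiply (or exponentiate) to the input length. Each additional head permits one further level of arithmetic nesting, and the construction is essentially head choreography---routine, if notationally involved---so I would treat it as the easy half of the argument.

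The main obstacle is the lower bound $L_k\notin\lfam(\tnha{k})$. The key structural observation is that on a unary input $a^n$, the scanned symbol at each head is determined by its position alone (being either $\leftend$, $a$, or $\rightend$), so the reachable configuration space of any $\tnha{k}$ has size at most $|S|\cdot(n+2)^k$ and its edge set has a very homogeneous combinatorial type across $n$. The problem thereby reduces to analysing reachability in a polynomial-size graph of fixed type, and the aim is to show that the length set $\{n : a^n\in L(M)\}$ obtained this way cannot realise the arithmetic depth $k+1$ present in $L_k$. The classical route combines a Parikh/semilinearity analysis with a pumping argument that iteratively eliminates head dimensions by exploiting recurring patterns in head trajectories; matching the resulting structural bound against the arithmetic complexity of $L_k$, and thereby extracting the required contradiction, is the delicate and technically demanding step.

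Putting the two parts together gives a unary language $L_k\in\lfam(\tdha{k+1})\setminus\lfam(\tnha{k})$, from which both separations claimed in the theorem follow.
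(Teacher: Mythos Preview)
The paper does not supply its own proof of this theorem; it states the result and attributes it to Monien~\cite{monien:1980:twmhaola}. So there is no in-paper argument to match your attempt against, only the cited source.

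As a proof, your proposal is not one: it is a programme whose decisive step is explicitly left open. You correctly identify the lower bound $L_k\notin\lfam(\tnha{k})$ as the crux, and you correctly note the $|S|\cdot(n+2)^k$ bound on the unary configuration space, but the passage from ``the reachable configuration graph has polynomial size of degree $k$'' to ``the accepted length set cannot have arithmetic depth $k{+}1$'' is precisely the content of the theorem, and you do not carry it out. Phrases like ``Parikh/semilinearity analysis'' are not obviously applicable here---unary $\tnha{k}$ languages need not be semilinear once $k\ge 2$---and the ``pumping argument that iteratively eliminates head dimensions'' is asserted, not exhibited. Your candidate witnesses are also shaky: for $\{\,a^n \mid n=m_1\cdots m_{k+1},\ m_i\ge 2\,\}$ you give no construction placing it in $\lfam(\tdha{k+1})$, and the ``cascade'' description does not translate into an algorithm that a $(k{+}1)$-head automaton can run without guessing the factors; for perfect $(k{+}1)$-th powers the upper bound is likewise not obvious from head choreography alone.

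It is worth knowing that Monien's actual route is quite different from the explicit-witness-plus-combinatorial-lower-bound scheme you outline. His argument is translational: one shows, via padding, that a collapse $\lfam(\tdha{k})=\lfam(\tdha{k+1})$ (respectively, the nondeterministic version) at any single level would propagate upward and force the entire hierarchy $\bigcup_j\lfam(\tdha{j})$ to collapse to a fixed level, contradicting the space hierarchy for $\mathsf{L}$ (respectively $\mathsf{NL}$) restricted to tally inputs. The unary witnesses then fall out of this indirect argument rather than being engineered by hand. Your direct approach would be interesting if it could be completed, but as it stands the essential difficulty is untouched.
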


Whether nondeterminism is better than determinism in the two-way setting,
is an open problem. 
In fact, in~\cite{sudborough:1975:tbccmhfa} it was shown that
the equality for at least one $k\geq 2$ implies $\mathsf{L}=\mathsf{NL}$. 
More generally, 
$
\mathsf{L}=\mathsf{NL} \mbox{ if and only if }
\lfam(\onha{2})\subseteq \bigcup_{k\geq 1}\lfam(\tdha{k})
$
was shown in~\cite{sudborough:1975:tbccmhfa} .
Due to a wide range of relations between several types of finite
automata with different resources, the results and open problems for $k$-head
finite automata apply in a similar way for other types.
Here, we mention deterministic two-way finite automata with~$k$
pebbles ($\tdpa{k}$), with $k$ linearly bounded counters ($\tdbca{k}$), and
with~$k$ linearly bounded counters with full-test ($\tdbcfa{k}$).  In order to
adapt the results, we present the hierarchy 
$
\lfam(\tdha{k}) \subseteq \lfam(\tdpa{k}) \subseteq \lfam(\tdbca{k})
\subseteq \lfam(\tdbcfa{k}) \subseteq \lfam(\tdha{k+1})
$
which has been shown in several famous papers,
e.g., \cite{Morita:1977:ccnbcamra,Petersen:1995:ash,Ritchie:1972:lrma,Sugata:1977:laraca}.

\section{Descriptional Complexity}

\noindent
It is natural to investigate the succinctness of the representations
of formal languages by different models.  For example, it is well known
that two-way and one-way finite automata are equivalent.
Recently, the problem of the costs in terms of states for these simulations
was
solved in~\cite{kapoutsis:2005:rbdnfa} by establishing a tight bound
of~$\binom{2n}{n+1}$ for the simulation of two-way deterministic as well as
nondeterministic finite automata by one-way nondeterministic finite automata.
In the same paper tight bounds of $n(n^n-(n-1)^n)$ and 
$\sum_{i=0}^{n-1}\sum_{j=0}^{n-1}\binom{n}{i}\binom{n}{j}(2^i-1)^j$
are shown for two-way deterministic and two-way nondeterministic finite
automata simulations by one-way deterministic finite automata.
Nevertheless, some challenging problems of
finite automata are still open.  An important example is the question
of how many states are sufficient and necessary to simulate $\tnha{1}$
with $\tdha{1}$; we refer
to~\cite{Hromkovic:2003:nvdtwfa,Leung:2001:tlbssa,Sakoda:1978:NST,Sipser:1980:lpssa}
for further reading.
All trade-offs mentioned so far are bounded by recursive
functions. But, for example, there is no recursive function which bounds the
savings in descriptional complexity between deterministic and unambiguous
pushdown automata~\cite{Valiant:1976:nsddl}.  
A survey of the phenomenon of non-recursive trade-offs
is~\cite{kutrib:2005:pnrto:art}.
Here we ask for the descriptional complexity of $k$-head finite automata. 
How succinctly can a language be 
presented by a $k$-head finite automaton 
compared with the presentation by a nondeterministic pushdown automaton, or
by a $(k+1)$-head finite automaton?
\begin{defn} 
A \emph{descriptional system} $E$ is a recursive set of finite descriptors,
where each descriptor $D \in E$
describes a formal language $L(D)$, and 
there exists an effective procedure to convert
$D$ into a Turing machine that decides (semi-decides)
the language $L(D)$, if $L(D)$ is recursive (recursively enumerable).
The \emph{family of languages represented (or described) by some descriptional
system}~$E$ is $L(E) =\{\,L(D) \mid D \in E\,\}$. 
For every language $L$, we define $E(L) =\{\,D \in E \mid L(D) =L\,\}$.

A \emph{complexity (size) measure} for
$E$ is a total, recursive function $c: E \to \mathbb{N}$, such that for
any alphabet~$A$, the set of descriptors in $E$ describing languages
over $A$ is 
recursively enumerable in order of increasing size, and
does not contain infinitely many descriptors of the same size.

Let $E_1$ and $E_2$ be two descriptional systems, and $c$ be a complexity
measure for $E_1$ and~$E_2$. A function $f: \mathbb{N} \to \mathbb{N}$, 
with $f(n) \geq n$, is said
to be an \emph{upper bound} for the increase in complexity when changing from a
minimal description in~$E_1$ to an equivalent minimal description in $E_2$, 
if for all $L \in L(E_1) \cap L(E_2)$ we have  
$
\min \{\,c(D) \mid D \in E_2(L) \,\} \leq 
f(\min \{ \,c(D) \mid D \in E_1(L)\,\}).
$
If there is no recursive upper bound, the \emph{trade-off is said to be 
non-recursive}. 
\end{defn}

In connection with multi-head finite automata the question of determining the trade-offs
between the levels of the head hierarchies arises immediately.
Recently, the problem whether these trade-offs are non-recursive 
has been solved for two-way devices in the affirmative~\cite{Kapoutsis:2005:kpotkdtonr}
(cf.~also \cite{kutrib:2005:dphcp:art}):
\begin{thm}
Let $k\geq 1$. % be a constant. 
Then the trade-off between 
deterministic (nondeterministic) two-way $(k+1)$-head finite automata and 
deterministic (nondeterministic) two-way $k$-head finite automata is 
non-recursive. Moreover, the results hold also for automata accepting unary languages.
\end{thm}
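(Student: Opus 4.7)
The plan is to adapt the classical Hartmanis-Stearns reduction. Assume, for contradiction, that some recursive function $f$ is an upper bound on the trade-off from $\tdha{k+1}$ to $\tdha{k}$, restricted to languages accepted by both systems. I would exhibit an effective map $M \mapsto A_M$ sending each Turing machine $M$ to a unary $\tdha{k+1}$ automaton $A_M$ of size polynomial in $|M|$, such that the minimum $\tdha{k}$ size for $L(A_M)$ is bounded by a computable function of $|M|$ when $M$ halts on the empty tape, and grows unboundedly in $|M|$ (and hence exceeds $f(|A_M|)$) when $M$ does not halt. Since halting on the empty tape is not semi-decidable, no such $f$ can exist. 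The nondeterministic case $\tnha{k+1}$ versus $\tnha{k}$ fits the same template.

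The construction has two layers. The outer layer is a unary hierarchy-separating family, in the style of Monien's proof, of languages $W^{(n)}$ whose minimum $\tdha{k+1}$ description is polynomial in $n$ while the minimum $\tdha{k}$ description is unbounded in $n$. The inner layer embeds a step-counting simulation of $M$: the automaton $A_M$ accepts $1^m$ iff either $M$ halts on the empty input within $g(m)$ steps, or $1^m \in W^{(h(m))}$ for a computable growth function $h$. A constant number of the $k+1$ heads are devoted to the $M$-simulation, using head positions on the unary tape as logspace counters in the standard two-way multi-head style, while the remaining heads run the Monien subroutine; the crucial quantitative point is that the total head budget is exactly $k+1$. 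When $M$ halts, $h$ stabilizes and $L(A_M)$ collapses to a fixed unary language of bounded $\tdha{k}$ complexity; when $M$ does not halt, the witness parameter grows with $m$, inheriting the Monien lower bound.

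The principal technical hurdle is preserving this lower bound uniformly under the embedding: one must guarantee that no $\tdha{k}$ of size at most $f(|A_M|)$ can recognize $L(A_M)$ by exploiting the $M$-simulation component to dodge the witness structure. This calls for a uniform crossing-sequence or fooling-set lower bound, sharpening Monien's argument to be uniform in the parameter $M$ and robust against such adversaries. Once established, the desired semi-decision procedure for halting follows: enumerate all $\tdha{k}$ of size at most $f(|A_M|)$ and, using the ultimate-periodicity structure of unary languages accepted by bounded-size multi-head automata, decide equivalence with $L(A_M)$ on a computable finite range of input lengths. For the nondeterministic statement one invokes Kapoutsis's fooling-set lower bound in place of Monien's deterministic crossing-sequence count, and the rest of the blueprint carries over unchanged.
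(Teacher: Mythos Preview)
The paper does not supply its own proof of this theorem; it is quoted from Kapoutsis~\cite{Kapoutsis:2005:kpotkdtonr}. The surrounding machinery the paper \emph{does} provide is Theorem~\ref{theo:technique1}, and a correct argument would instantiate that template with a unary language $L_M$ that is effectively accepted by a $\tdha{k+1}$ and lies in $\lfam(\tdha{k})$ if and only if some non-semi-decidable property of $M$ fails. Your proposal gestures at this but has two genuine gaps.

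First, the logical direction is wrong. Halting on the empty tape \emph{is} semi-decidable; non-halting is not. In your setup (``$M$ halts $\Rightarrow$ small $\tdha{k}$ exists; $M$ does not halt $\Rightarrow$ none of size $\le f(|A_M|)$''), searching for a small equivalent $\tdha{k}$ semi-decides halting, which gives no contradiction. You try to upgrade this to a decision procedure via ``ultimate periodicity'' of unary $\tdha{k}$-languages, but that claim is false: two-way multi-head automata accept non-regular unary languages---that is exactly Monien's hierarchy result---so there is no finite window on which equivalence can be checked. The standard route avoids this entirely: one semi-decides the non-semi-decidable property by verifying \emph{in}equivalence (semi-decidable for recursive languages) against each of the finitely many candidate descriptors.

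Second, the language $L(A_M)$ is not under control. For a fixed non-halting $M$, the set $\{1^m : 1^m\in W^{(h(m))}\}$ is one fixed language; its $\tdha{k}$-complexity is one number, so ``grows unboundedly in $|M|$'' is a category error and does not yield ``exceeds $f(|A_M|)$'' for that particular $M$. Worse, you have not shown that a single $\tdha{k+1}$ can test ``$1^m\in W^{(h(m))}$'' uniformly in $m$ (Monien gives one witness language per level, not a parametrized family uniformly recognizable with a bounded head budget), nor that this diagonal mixture inherits any lower bound against $\tdha{k}$. What is actually needed---and what the paper does in the one-way case via Lemma~\ref{lem:positive-hierarchy}---is to splice a Turing-machine encoding into a \emph{single} hierarchy-witness language so that finiteness of $L(M)$ collapses $L_M$ into $\lfam(\tdha{k})$ while infiniteness keeps it out; then Theorem~\ref{theo:technique1} applies with no size-tracking or equivalence-deciding at all.
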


But how to prove non-recursive trade-offs? Roughly speaking, most of 
the proofs appearing in the literature are
basically relying on one of two different schemes. 
One of these techniques is due to
Hartmanis~\cite{Hartmanis:1980:sdrl:art,Hartmanis:1983:gsuslr}.
Next, we present a slightly generalized and unified form of this 
technique~\cite{kutrib:2005:dphcp:art,kutrib:2005:pnrto:art}. 
\begin{thm}\label{theo:technique1}
Let $E_1$ and $E_2$ be two descriptional systems for
recursive languages.  If there exists a descriptional system $E_3$
such that, given an arbitrary ${M} \in E_3$,
(i)~there exists an effective procedure to construct a descriptor in 
$E_1$ for some language~$L_{{M}}$, 
(ii)~$L_{{M}}$ has a descriptor in~$E_2$ if and only if
$L({M})$ does not have a property~$P$, and
(iii)~property~$P$
is not semi-decidable for languages with descriptors in~$E_3$,
then the trade-off between~$E_1$ and~$E_2$ is non-recursive.
\end{thm}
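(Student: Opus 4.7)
The plan is a proof by contradiction: I would assume a recursive upper bound on the trade-off from $E_1$ to $E_2$ and convert it into a semi-decision procedure for property $P$ on $E_3$, which would contradict hypothesis~(iii).

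Concretely, suppose that $f\colon \mathbb{N}\to\mathbb{N}$ with $f(n)\geq n$ is a recursive function bounding the trade-off. Given an input $M\in E_3$, hypothesis~(i) effectively yields a descriptor $D_1\in E_1$ for $L_{M}$; set $n = c(D_1)$. If $L_{M}$ admits any descriptor in $E_2$ at all, then by assumption on $f$ it admits one of size at most $f(n)$. The axioms of a complexity measure guarantee that, over the alphabet of $L_{M}$, only finitely many $E_2$ descriptors have size~$\leq f(n)$ and that they can be effectively enumerated; call this finite list $D'_1,\ldots,D'_m$.

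Next, I would semi-decide the predicate ``$L_{M}$ has no descriptor in $E_2$''. Because $E_1$ and $E_2$ are systems for recursive languages, membership in $L_{M}$ (via $D_1$) and in each $L(D'_i)$ is decidable; hence ``$L_{M}$ and $L(D'_i)$ disagree on some input'' is semi-decidable by enumerating $A^*$ and testing membership in both. Running the $m$ enumerations in parallel and waiting for all of them to succeed therefore semi-decides the finite conjunction ``$L_{M}\neq L(D'_i)$ for every $i\in\{1,\ldots,m\}$'', which expresses precisely that no $E_2$ descriptor of size $\leq f(n)$ equals $L_{M}$. By~(ii) this conjunction is equivalent to ``$L(M)$ has property $P$'', producing a semi-decision procedure for $P$ on $E_3$ and contradicting~(iii).

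The one delicate point is the semi-decidability of inequivalence, which is exactly why the theorem insists that both $E_1$ and $E_2$ describe recursive languages: if either system could describe non-recursive sets, the statement ``no equivalent descriptor of bounded size exists'' would sit at a $\Sigma_2$ level and the argument would collapse. Everything else is bookkeeping: hypothesis~(i) supplies the effective construction of $D_1$, the definition of a complexity measure supplies the effective enumeration of $E_2$ descriptors up to size $f(n)$, and recursiveness of $f$ guarantees that the bound $f(n)$ is computable from $n$.
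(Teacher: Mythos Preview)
The paper does not actually supply a proof of this theorem; it is quoted as a tool from~\cite{kutrib:2005:dphcp:art,kutrib:2005:pnrto:art} and then applied. Your argument is precisely the standard proof of this technique and is correct: assume a recursive bound, effectively list the finitely many candidate $E_2$ descriptors below the bound, and dovetail the inequivalence tests to obtain a semi-decision procedure for~$P$, contradicting~(iii).

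One small point worth tightening: you set $n=c(D_1)$ and then claim that any $E_2$ descriptor for $L_M$, if one exists, must have size at most $f(n)$. Strictly, the bound in the definition applies to $f(\min\{c(D)\mid D\in E_1(L_M)\})$, and $\min\leq n$ does not give $f(\min)\leq f(n)$ unless $f$ is nondecreasing. This is harmless---replace $f$ by the recursive nondecreasing $g(n)=\max_{m\le n}f(m)$, which is still an upper bound---but it is worth saying explicitly.
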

In order to apply Theorem~\ref{theo:technique1} one needs a descriptional 
system $E_3$ with
appropriate problems that are not even semi-decidable. 
An immediate descriptional system is the set of Turing
machines for which only trivial problems are decidable and a lot of problems
are not semi-decidable. 
Basically, we consider \emph{valid computations of Turing
machines}~\cite{Hartmanis:1967:cfltmc}. Roughly 
speaking, these are histories of accepting Turing machine computations. 
It suffices to consider deterministic Turing machines with one single tape
and one single read-write head. 
Without loss of generality and for technical reasons, we assume that the Turing
machines can halt only after an odd number of moves, accept by halting, make
at least three moves, and cannot print a blank. 
A valid computation is a string built from a sequence of configurations
passed through during an accepting computation. 
Let $S$ be the state set of some Turing machine~$M$,
where $s_0$ is the initial state, $T \cap S = \emptyset$ is the tape alphabet
containing the blank symbol, $A\subset T$ is the input alphabet, and
$F \subseteq S$ is the set of accepting states.  Then a configuration of
$M$ can be written as a word of the form $T^*ST^*$ such that
$t_1\ldots t_i s t_{i+1} \ldots t_n$ is used to express that~$M$ is
in state $s$, scanning tape symbol $t_{i+1}$, and $t_1$ to $t_n$ is the
support of the tape inscription.

For the purpose of the following, valid computations
$\mathrm{VALC}(M)$ are now defined to be the set of words
$\dollar w_1 \dollar w_2\dollar \ldots \dollar w_{2n}\dollar$,
where $\dollar \notin T\cup S$, $w_i
\in T^*ST^*$ are configurations of $M$, $w_1$ is an initial
configuration of the form~$s_0A^*$, $w_{2n}$ is an accepting configuration of the
form $T^*FT^*$, and $w_{i+1}$ is the successor configuration of $w_i$,
for $1\leq
i < 2n$.
The set of \emph{invalid computations} $\invalc(M)$
is the complement of $\valc({M})$ with respect to the alphabet
$\{\dollar\} \cup T \cup S$. 

Now we complement the results on the trade-offs between the levels of the
head hierarchy for two-way automata. We show non-recursive trade-offs between 
the levels of the head hierarchies of deterministic as well as nondeterministic one-way
devices. Moreover, non-recursive trade-offs are shown
between nondeterministic two-head and deterministic $k$-head automata.
\begin{exmp}\label{exa:twoheads-accept-valid}
Let ${M}$ be a Turing machine. Then an $\odha{2}$ ${M'}$ can be
constructed that accepts $\valc({M})$.
One task of ${M'}$ is to verify the correct form of the input, that is, whether it is
of the form 
$\dollar s_0 A^* \dollar T^* S T^*\dollar \ldots \dollar T^* S T^*\dollar
T^* FT^* \dollar$.
This task means to verify whether the input belongs to a
regular set and can be done in parallel to the second task. 

The second task is to verify for each two adjacent subwords whether the second
one represents the successor configuration of the first one.  We show the
construction for $w_i \dollar w_{i+1}$.  Starting with the first
head on the first symbol of~$w_i$ and the second head on the first symbol of
$w_{i+1}$, automaton ${M'}$ compares the subwords symbolwise by moving
the heads to the right. Turing machine ${M}$ has three possibilities
to move its head.  So, $w_i = t_1 \ldots t_i s t_{i+1} \ldots t_n$ goes to
$t_1 \ldots t_i s' t_{i+1}' \ldots t_n$, to $t_1 \ldots s' t_i t_{i+1}' \ldots
t_n$, or to $t_1 \ldots t_i t_{i+1}' s' \ldots t_n$.  Each of the three
possibilities can be detected by ${M'}$.  Furthermore, ${M'}$
can verify whether the differences between $w_i$ and $w_{i+1}$ are due to a
possible application of the transition function of ${M}$.  Finally, 
the first head is moved on the first symbol of $w_{i+1}$, and the second head 
is moved on the first symbol of $w_{i+2}$ to start the verification of $w_{i+1}$ and
$w_{i+2}$.
\end{exmp}

\begin{thm}
Let $k \geq 2$.
The trade-off between $\odha{k}$ and
nondeterministic pushdown automata is non-recursive.
\end{thm}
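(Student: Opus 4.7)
The plan is to apply Theorem~\ref{theo:technique1}, taking $E_1=\odha{k}$, $E_2$ to be the class of nondeterministic pushdown automata, and $E_3$ to be the class of deterministic single-tape Turing machines in the technical normal form fixed in the preceding discussion. To an arbitrary $M\in E_3$ I would associate the language $L_M=\valc(M)$ and the property $P$ given by ``$L(M)$ is infinite''.

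Hypothesis (i) comes essentially for free from Example~\ref{exa:twoheads-accept-valid}: it constructs a $\odha{2}$ for $\valc(M)$ effectively from $M$, and for $k\geq 2$ the same automaton is immediately promoted to a $\odha{k}$ by keeping the additional $k-2$ heads stationary on the left endmarker. Hypothesis (iii) is a classical computability fact: if infiniteness of $L(M)$ were semi-decidable, a standard reduction---given an arbitrary Turing machine $N$, build $M'$ that accepts $1^n$ iff $N$ has not halted within $n$ steps, so that $L(M')$ is infinite iff $N$ fails to halt---would render the non-halting problem semi-decidable, contradicting its well-known non-semi-decidability.

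The main obstacle is hypothesis (ii): $\valc(M)$ has an NPDA descriptor if and only if $L(M)$ is finite. The easy direction is that finiteness of $L(M)$ forces $\valc(M)$ to be finite (by determinism of $M$), hence regular and certainly context-free. For the converse I plan to argue by contraposition via the pumping lemma for context-free languages. Assuming $\valc(M)$ were context-free with pumping constant $N$, I would exploit the infiniteness of $L(M)$ to pick an accepting computation $\dollar w_1\dollar w_2\dollar\cdots\dollar w_{2n}\dollar$ in which the individual configurations $w_i$ are much longer than $N$. Any pumping decomposition $uvwxy$ satisfies $|vwx|\leq N$, so $v$ and $x$ together are confined to a single $w_i$ or straddle at most one separator $\dollar$. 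A finite case analysis---on whether $vx$ contains some $\dollar$, lies entirely within one $w_i$, or straddles the boundary between two consecutive configurations---shows that every nontrivial pumping either changes the number of separators, destroys the form $T^*ST^*$ of one configuration, or perturbs one configuration without the matching change in its successor; in each case the pumped string falls outside $\valc(M)$, yielding the required contradiction. With hypotheses (i)--(iii) verified, Theorem~\ref{theo:technique1} then delivers the claimed non-recursive trade-off.
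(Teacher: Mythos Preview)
Your proposal is correct and follows the paper's proof exactly: apply Theorem~\ref{theo:technique1} with $E_3$ the class of Turing machines, $L_M=\valc(M)$, and property $P$ the infiniteness of $L(M)$. The only difference is that the paper obtains hypothesis~(ii) by citing Hartmanis~\cite{Hartmanis:1967:cfltmc} for the fact that $\valc(M)$ is context free if and only if $L(M)$ is finite, whereas you re-derive this directly via the pumping lemma.
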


\begin{proof}
In order to apply Theorem~\ref{theo:technique1}, let $E_3$ be the set
of Turing machines.  For every ${M} \in E_3$, define
$L_{{M}}$ to be $\valc({M})$.  So, $L_{{M}}$
is accepted by some $\odha{k}$. In~\cite{Hartmanis:1967:cfltmc}
it was shown that $\valc({M})$ is context free if and only if $L({M})$
is finite. Since infiniteness is not semi-decidable for Turing machines,
all conditions of Theorem~\ref{theo:technique1} are satisfied and
the assertion follows.
\end{proof}
\begin{cor}\label{cor:start-hierarchy}
Let $k \geq 2$. \!The trade-off between $\odha{k}$ and
$\onha{1}$ is non-recursive.
\end{cor}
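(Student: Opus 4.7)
The plan is to apply Theorem~\ref{theo:technique1} exactly as in the preceding theorem, taking $E_1$ to be the class of $\odha{k}$, $E_2$ to be the class of $\onha{1}$ (so that $L(E_2)=\reg$), and $E_3$ to be the set of deterministic single-tape Turing machines. For an arbitrary ${M}\in E_3$, I would again set $L_{{M}}=\valc({M})$, which by Example~\ref{exa:twoheads-accept-valid} is accepted by some $\odha{2}$ and hence, by adding $k-2$ idle heads, by some $\odha{k}$ for every $k\geq 2$. This gives condition (i).

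For condition (ii) I need the characterization that $\valc({M})\in\reg$ if and only if $L({M})$ is finite, so that property $P$ can be taken as ``$L({M})$ is infinite''. The forward direction is already contained in the proof of the previous theorem: if $\valc({M})$ is regular, then it is in particular context free, and by the result of \cite{Hartmanis:1967:cfltmc} quoted above this forces $L({M})$ to be finite. The reverse direction is even easier: since ${M}$ is deterministic, each accepted input contributes exactly one accepting computation, so when $L({M})$ is finite the set $\valc({M})$ is itself finite and therefore regular. Thus $L_{{M}}$ has a descriptor in $E_2$ if and only if $L({M})$ does not have property $P$.

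Condition (iii) is the standard fact that infiniteness of the language accepted by a Turing machine is not semi-decidable (it is $\Pi_2$-complete), which yields the non-recursive trade-off between $\odha{k}$ and $\onha{1}$ at once.

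There is no genuine obstacle here: the work has essentially been done in the previous theorem, and the only new ingredient is to strengthen ``context free $\Leftrightarrow$ $L({M})$ finite'' to ``regular $\Leftrightarrow$ $L({M})$ finite''. The mild subtlety to watch is ensuring determinism of ${M}$ in the reverse direction, since for a nondeterministic ${M}$ a finite $L({M})$ could still give rise to infinitely many accepting computations on the same input; this is why the excerpt explicitly restricts to deterministic Turing machines.
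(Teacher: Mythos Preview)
Your proposal is correct and matches the paper's intended argument. The paper states the result as an immediate corollary of the preceding theorem without giving a separate proof; the implicit reasoning is precisely what you spell out: the same witness languages $\valc({M})$ work, because when $L({M})$ is finite the set $\valc({M})$ is finite (hence regular), and when $L({M})$ is infinite it is not even context free (hence not regular), so Theorem~\ref{theo:technique1} applies with $E_2=\onha{1}$ just as it did with NPDAs.
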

In the following we exploit the basic hierarchy theorem shown 
in~\cite{Yao:1978:kpobtk}, and recall that the language
$
L_n = \{\,w_1 \dollar w_2 \dollar \ldots \dollar w_{2n} \mid w_i \in \{a,b\}^*
\mbox{ and } w_i =w_{2n+1-i} \mbox{ for } 1\leq i \leq n\,\}
$
is accepted by some deterministic or nondeterministic one-way $k$-head finite
automaton if and only if $n \leq \binom{k}{2}$.
Now we extend the language in order to meet our purposes. 
Basically, the idea is to keep the structure of the language but to build 
the subwords~$w_i$ over an alphabet of pairs, 
that is, 
$w_i={u_1\atop v_1}{u_2\atop v_2}{\ldots\atop \ldots}{u_m\atop v_m}$,
where the $u_j$ and $v_j$ are symbols such that the upper parts of the
subwords are words over~$\{a,b\}$ as in~$L_n$. 
The lower parts are valid computations of some given Turing machine ${M}$.
Let $W_{{M}} = \{\, {u\atop v} \mid u \in \{a,b\}^*, v\in \valc({M}),
|u|=|v|\,\}$. Then $L_{n,{M}}$ is defined to be 
$\{\,w_1\dollar w_2\dollar\ldots\dollar w_{2n} \mid w_i \in W_{{M}}$
and $w_i =w_{2n+1-i}$ for $1\leq i \leq n\,\}$.
\begin{lem}\label{lem:positive-hierarchy}
Let $k \geq 2$ %be a constant 
and ${M}$ be some Turing
machine. Then a $\odha{k+1}$ can be constructed that accepts 
$L_{\binom{k}{2}+1, {M}}$.
\end{lem}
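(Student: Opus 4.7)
The plan is to start from Yao's $\odha{k+1}$ accepting the ``pure palindrome'' language $L_{\binom{k}{2}+1}$ (available since $\binom{k}{2}+1\leq\binom{k+1}{2}$) and graft onto it the $\valc({M})$-test of Example~\ref{exa:twoheads-accept-valid} on the lower projection of every block. I would split the acceptance conditions into three tasks running concurrently. Task~(A), a ``skeletal'' regular check that the input has the form $w_1\dollar\cdots\dollar w_{2n}$ with $w_i\in W_{{M}}^{\ast}$, upper components in $\{a,b\}$, and each $v_i$ of the right shape (starting and ending with $\dollar$, alternating $\dollar$'s and configurations in $T^{\ast}ST^{\ast}$, first configuration in $s_0A^{\ast}$, last in $T^{\ast}FT^{\ast}$), can be folded into the finite control of any head traversing the corresponding block. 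Task~(B) is the palindrome check $w_i=w_{2n+1-i}$ carried out on the full pair-alphabet by Yao's scheme; since the scheme compares pair symbols, it simultaneously enforces $v_i=v_{2n+1-i}$. Task~(C) is the successor-configuration test inside each $v_i$, and by~(B) it suffices to perform it for $1\leq i\leq n$.

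The central idea for embedding Task~(C) without introducing new heads is to reuse the very pair of heads that Yao's scheme uses to compare $w_i$ with $w_{2n+1-i}$. The head moving through $w_i$ traces $v_i$, and its partner moving through $w_{2n+1-i}$ traces $v_{2n+1-i}=v_i$. If the partner is advanced by exactly one configuration piece of the lower projection (using the $\dollar$-markers already recognised in Task~(A)) before entering its synchronised phase, then from that moment on the two heads read, symbol by symbol, two consecutive configurations of $v_i$. This is precisely the head arrangement of Example~\ref{exa:twoheads-accept-valid}, so the finite control can verify that they are related by a single transition of ${M}$; together with the initial and final configuration tests from Task~(A) this establishes $v_i\in\valc({M})$.

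The hard part will be reconciling the one-configuration offset demanded by Task~(C) with the same-offset alignment demanded by the ordinary symbol-by-symbol palindrome comparison of Task~(B) for that same head-pair. I would address this by exploiting the slack $\binom{k+1}{2}-(\binom{k}{2}+1)=k-1\geq 1$ left unused by Yao's pairing for $k\geq 2$: in each comparison phase at least one head-pair is free from palindrome duty and can be dedicated to the offset successor check, while a different pair carries out the standard comparison. In the tight case $k=2$ the single spare pair is enough, because Yao's two-phase scheme for $L_2$ processes the inner pair $(w_2,w_3)$ and the outer pair $(w_1,w_4)$ one after the other, so the spare pair can run the $\valc({M})$-check on $v_2=v_3$ during the first phase and on $v_1=v_4$ during the second, with all the necessary head positioning absorbed into the idle moves already present in Yao's construction.
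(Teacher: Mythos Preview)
Your plan diverges from the paper's and, as written, has a genuine gap. The paper does \emph{not} interleave the $\valc(M)$-test with the palindrome comparisons. It proceeds sequentially: first heads~$1$ and~$k+1$ sweep through $w_1,\dots,w_n$ performing the successor check of Example~\ref{exa:twoheads-accept-valid} on the lower tracks; both finish at the beginning of $w_{n+1}$. Only then does the palindrome phase start: head~$1$ continues forward to $w_{2n-k+2}$, heads~$2,\dots,k$ (which have not moved at all) spread over $w_1,\dots,w_{k-1}$, and a standard Yao cascade handles $\binom{k}{2}$ pairs; head~$k+1$, still parked at $w_{n+1}$, takes care of the one remaining pair $(w_n,w_{n+1})$. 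No offset trick and no head doing double duty in a single pass.

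Your ``spare pair'' argument conflates unused combinatorial pairings with physically available heads. The slack $\binom{k+1}{2}-n=k-1$ counts pairings that Yao's scheme never exploits, not heads that are simultaneously idle at the positions you need. Concretely, in your $k=2$ analysis, during phase~1 (heads~$2$ and~$3$ comparing $w_2$ with $w_3$) the only idle object is head~$1$, a \emph{single} head sitting at $w_1$; there is no second idle head with which to form a pair for the offset successor check on~$v_2$. If instead you devote all three heads to the block pair $(w_2,w_3)$---one at $w_2$ and two at $w_3$, one aligned for Task~(B) and one offset for Task~(C)---then all three heads leave phase~1 at or beyond $w_3$, and nobody remains at $w_1$ for the outer comparison $(w_1,w_4)$ in phase~2. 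The one-way restriction makes this unrecoverable. The same obstruction appears for larger~$k$: heads that belong to an ``unused'' pairing are individually busy in other comparisons and are not jointly free at matching block boundaries. The fix is exactly the paper's: decouple the two tasks in time, running the $\valc$ check to completion before any palindrome comparison begins.
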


\begin{proof}
Let $n= \binom{k}{2}+1$. 
We sketch the construction. At first, two of the heads, say the first and\linebreak
$(k+1)$st one, are used to verify that the lower parts of the subwords
$w_1, w_2,
\ldots, w_n$ do belong to $\valc ({M})$ 
(cf.~Example~\ref{exa:twoheads-accept-valid}). 
At the end of this task, both heads are positioned at the beginning of
$w_{n+1}$. 
Next, the first head is moved to the beginning of $w_{2n+1-(k-1)}$ while
the remaining $k-1$ heads are moved to the beginnings of $w_1, w_2,\dots,
w_{k-1}$, respectively. Now the $k-1$ words $w_{2n+1-(k-1)}$ to $w_{2n}$ are
successively compared with the words $w_{k-1}$ to~$w_1$. 
Next, the $k-1$ heads are moved to the beginning of $w_k$, and the same
procedure is applied inductively to verify $w_k \dollar w_{k+1}\dollar\ldots\dollar
w_{2n+1-k}$.

After $k-1$ repetitions 
$\sum_{i=k-1}^1 i = \frac{k^2-k}{2}=\binom{k}{2}$ pairs have been compared,
that is, the pairs $w_i$ and $w_{2n+1-i}$, for $1\leq i\leq n-1$. So, one of the heads that
compared $w_{n-1}$ and $w_{n+2}$ is positioned at the beginning of
$w_n$. Since head $k+1$ is still positioned at the beginning of $w_{n+1}$, the
remaining pair $w_n$ and $w_{n+1}$ can be compared.
\end{proof}

It is a straightforward adaption of the hierarchy result of~\cite{Yao:1978:kpobtk}
to prove that the language $L_{\binom{k}{2}+1, {M}}$ is not accepted by any
$\onha{k}$ if $L({M})$ is infinite. 

Now we can apply Theorem~\ref{theo:technique1} in order to show the non-recursive trade-offs
between any two levels of the deterministic or nondeterministic head hierarchy.
\begin{thm}
Let $k\geq 1$. % be a constant. 
The trade-offs between $\odha{k+1}$
and $\odha{k}$, between $\onha{k+1}$ and $\onha{k}$, and
between $\odha{k+1}$ and $\onha{k}$
are non-recursive.
\end{thm}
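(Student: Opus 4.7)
The plan is to apply Theorem~\ref{theo:technique1} with $E_3$ the set of deterministic one-tape Turing machines, the property $P$ being ``$L({M})$ is infinite'', and $L_{{M}} := L_{\binom{k}{2}+1, {M}}$ from Lemma~\ref{lem:positive-hierarchy}. All three trade-offs will be handled in parallel by varying only the choice of $E_1$ and $E_2$: for the first take $E_1 = \odha{k+1}$ and $E_2 = \odha{k}$; for the second take $E_1 = \onha{k+1}$ and $E_2 = \onha{k}$; for the third take $E_1 = \odha{k+1}$ and $E_2 = \onha{k}$.

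Condition~(i) of Theorem~\ref{theo:technique1} is immediate in every case, since Lemma~\ref{lem:positive-hierarchy} gives an effective construction of a $\odha{k+1}$ for $L_{{M}}$ from ${M}$, and every $\odha{k+1}$ is also an $\onha{k+1}$. For condition~(ii) I would show that $L_{{M}}$ lies in $\lfam(E_2)$ if and only if $L({M})$ is finite. The easy direction is the ``finite'' case: since ${M}$ is deterministic, a finite $L({M})$ has only finitely many accepting computations of bounded length, so $\valc({M})$ is finite; the palindromic constraint $w_i = w_{2n+1-i}$ then forces $L_{{M}}$ itself to be finite, hence regular, hence trivially in $\lfam(\odha{k}) \cap \lfam(\onha{k})$. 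For the other direction I would invoke the remark stated right after Lemma~\ref{lem:positive-hierarchy}, namely the Yao-style hierarchy argument of~\cite{Yao:1978:kpobtk} adapted to the alphabet of pairs: if $L({M})$ is infinite then $L_{{M}} \notin \lfam(\onha{k})$, and since $\lfam(\odha{k}) \subseteq \lfam(\onha{k})$ also $L_{{M}} \notin \lfam(\odha{k})$. Condition~(iii) is standard, as infiniteness of $L({M})$ for Turing machines is well known not to be semi-decidable.

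The step I expect to be the main obstacle is the lower bound $L_{\binom{k}{2}+1, {M}} \notin \lfam(\onha{k})$ when $L({M})$ is infinite. The original separation in~\cite{Yao:1978:kpobtk} uses a crossing-sequence / counting argument in which arbitrarily long palindromic blocks over $\{a,b\}$ carry more information across the $\dollar$ boundaries than $k$ one-way heads can track; the twist here is that the admissible block lengths are precisely the lengths of words in $\valc({M})$, and infiniteness of $L({M})$ is exactly what guarantees infinitely many such lengths, so Yao's adversary can still choose sufficiently long distinguishing pairs ${u\atop v}$. Once this adaptation is in place, all three trade-offs follow simultaneously from the above instantiation of Theorem~\ref{theo:technique1}.
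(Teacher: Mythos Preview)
Your approach is essentially identical to the paper's: same $E_3$, same property~$P$ (infiniteness), same witness language $L_{{M}} = L_{\binom{k}{2}+1,{M}}$, and the same two-sided argument for condition~(ii) via finiteness of $\valc({M})$ and the adapted Yao--Rivest lower bound.

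The one point you should fix is the base case $k=1$. Lemma~\ref{lem:positive-hierarchy} is stated only for $k\geq 2$, so your uniform invocation of it does not literally cover the first level of the hierarchy. The paper handles $k=1$ separately by appealing to Corollary~\ref{cor:start-hierarchy}, i.e., using $\valc({M})$ itself (Example~\ref{exa:twoheads-accept-valid}) rather than the pair-alphabet language. You could either cite that corollary for $k=1$, or observe that for $k=1$ the construction of Lemma~\ref{lem:positive-hierarchy} degenerates to a two-head check of $w_1=w_2$ plus the $\valc$ verification and so still goes through; but as written, your proposal leaves this case unaddressed.
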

\begin{proof}
For $k=1$, the theorem has been shown by Corollary~\ref{cor:start-hierarchy}. 
So, let $k\geq2$. We apply Theorem~\ref{theo:technique1} by setting $E_3$ to be the set of
Turing machines and $L_{{M}} = L_{\binom{k}{2}+1, {M}}$. As property $P$
we choose infiniteness.
Lemma~\ref{lem:positive-hierarchy} shows that the language 
$L_{{M}}$ is accepted
by some $\odha{k+1}$ and, therefore, by some $\onha{k+1}$. 
On the other hand, language~$L_{{M}}$ is not
accepted by any $\onha{k}$ and,
therefore, not accepted by any $\odha{k}$ if $L({M})$ is
infinite. Conversely, if $L({M})$ is finite, the set $\valc ({M})$
is finite. This implies that $L_{\binom{k}{2}+1, {M}} = L_{{M}}$ is
finite and, thus, is accepted by some $\odha{k}$ and, therefore, by some
$\onha{k}$. 
\end{proof}
The next question asks for the trade-offs between nondeterministic and
deterministic automata. Clearly, the trade-off between $\onha{1}$ and
$\odha{1}$ is recursive. But following an idea 
in~\cite{Yao:1978:kpobtk} which separates 
nondeterminism from determinism, we can show non-recursive trade-offs on 
every level $k\geq2$ of the hierarchy as follows.
\begin{thm}
Let $k\geq 2$. % be a constant. 
Then the trade-off between 
$\onha{2}$ and $\odha{k}$ and, thus, between 
$\onha{k}$ and
$\odha{k}$ is non-recursive.
\end{thm}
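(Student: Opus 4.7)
The plan is to apply Theorem~\ref{theo:technique1} with $E_1=\onha{2}$, $E_2=\odha{k}$, $E_3$ the set of Turing machines, and property~$P$ equal to ``$L(M)$ is infinite''. For each $M\in E_3$ I would set $L_M=\overline{L_{\binom{k}{2}+1,M}}$, the complement (with respect to the pair alphabet plus $\dollar$) of the language built above. Note that this mimics Yao's separation of nondeterminism from determinism: $L_n$ is hard for $\onha{k}$ when $n>\binom{k}{2}$, while $\overline{L_n}$ is easy for $\onha{2}$ for every~$n$.

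For condition~(i) I would effectively construct an $\onha{2}$ accepting $L_M$. An input lies in $L_M$ iff one of three failure modes occurs: (a) it does not match the surface shape $\dollar w_1\dollar\cdots\dollar w_{2n}\dollar$ with symbols drawn from the pair alphabet; (b) the lower part of some block $w_i$ fails to lie in $\valc(M)$; or (c) some pair $w_i,w_{2n+1-i}$ disagrees. The automaton nondeterministically picks a mode. Mode~(a) is a regular check performed by the finite control during the run. For~(b) I would guess~$i$, advance both heads in parallel until they pass $i-1$ dollars and sit at the start of~$w_i$, and then run on the lower part of~$w_i$ the standard two-head scan for $\invalc(M)$ that guesses two adjacent configurations inside~$w_i$ which violate $M$'s transition function. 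For~(c) I would use Yao's $\onha{2}$ construction for $\overline{L_n}$: guess~$i$, let the second head run ahead while the first head waits until the second is $2(n-i)+1$ dollars ahead, then advance the two heads in lock-step through $w_i$ and $w_{2n+1-i}$, nondeterministically guessing a position at which their pair symbols differ.

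For condition~(ii), if $L(M)$ is finite then $\valc(M)$ is finite, hence $L_{\binom{k}{2}+1,M}$ is finite, so $L_M$ is cofinite and thus accepted by some $\odha{1}$ and a fortiori by some $\odha{k}$. If $L(M)$ is infinite then by the adaptation of Yao's hierarchy argument stated just after Lemma~\ref{lem:positive-hierarchy}, $L_{\binom{k}{2}+1,M}$ is not accepted by any $\onha{k}$ and hence not by any $\odha{k}$; since $\lfam(\odha{k})$ is closed under complementation, neither is $L_M$. Condition~(iii) is standard: infiniteness of $L(M)$ is not semi-decidable for Turing machines. Theorem~\ref{theo:technique1} then yields the non-recursive trade-off between $\onha{2}$ and $\odha{k}$. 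The trade-off between $\onha{k}$ and $\odha{k}$ follows immediately, because any $\onha{2}$ can be viewed as an $\onha{k}$ with $k-2$ idle heads at only an additive cost, so the same witness family has $\onha{k}$ descriptions of bounded size whose $\odha{k}$ counterparts must blow up non-recursively.

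The main obstacle I foresee is case~(c) of the construction: using only two one-way heads, the machine must both position itself on the correct pair of blocks and then compare them symbol by symbol without ever being able to rescan. I would handle this by merging positioning and comparison into a single forward sweep, guessing the dollar-offset between the two heads at the outset and exploiting the rigid block structure (the separating dollars between $w_i$ and $w_{i+1}$) to synchronize the subsequent comparison phase.
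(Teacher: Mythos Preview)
Your proposal is correct and follows precisely the route the paper indicates: the paper does not spell out a proof but states that the result is obtained ``following an idea in~\cite{Yao:1978:kpobtk} which separates nondeterminism from determinism,'' i.e., by passing to the complement $\overline{L_{\binom{k}{2}+1,M}}$, which lies in $\lfam(\onha{2})$ for every~$M$, and invoking closure of $\lfam(\odha{k})$ under complementation together with the adapted Yao--Rivest lower bound. Your three-mode decomposition for the $\onha{2}$ acceptor, the cofiniteness argument when $L(M)$ is finite, and the reduction of the $\onha{k}$ versus $\odha{k}$ statement to the $\onha{2}$ case by padding with idle heads are exactly what is needed; the ``obstacle'' you flag in case~(c) is in fact benign, since $n=\binom{k}{2}+1$ is a constant and the required dollar-offsets can be stored in the finite control.
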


Another consequence of the fact that the set of valid computations is accepted
by $k$-head finite automata is that many of their properties are not even
semi-decidable. We can transfer the results from Turing machines.
\begin{thm}\label{theo:non-semidecidable}
Let $k\geq 2$. %be a constant. 
Then the problems of 
emptiness, finiteness, infiniteness,
universality, inclusion, equivalence, regularity, and context-freedom
are not semi-decidable for
$\lfam(\odha{k})$, $\lfam(\onha{k})$, $\lfam(\tdha{k})$, and $\lfam(\tnha{k})$. 
\end{thm}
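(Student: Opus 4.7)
The plan is to reduce from known non-semi-decidable properties of Turing machines via valid computations, exploiting the construction from Example~\ref{exa:twoheads-accept-valid}. Since $\lfam(\odha{2})$ is contained in each of the four classes named in the statement (via the hierarchy inclusions and the trivial embedding of one-way into two-way), it suffices to produce a mapping $M \mapsto M'$ where $M$ is an arbitrary Turing machine, $M'$ is a $\odha{2}$ (hence also a $\odha{k}$, $\onha{k}$, $\tdha{k}$, $\tnha{k}$ for every $k \geq 2$), and each target property of $L(M')$ encodes a non-semi-decidable property of $L(M)$.

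First I would handle emptiness, finiteness and infiniteness. Take $M'$ from Example~\ref{exa:twoheads-accept-valid} so that $L(M') = \valc(M)$. Since a Turing machine has an accepting computation of length $\geq 3$ if and only if $\valc(M) \neq \emptyset$, the set $\valc(M)$ is empty (resp.\ finite, infinite) exactly when $L(M)$ is empty (resp.\ finite, infinite). Because emptiness, finiteness and infiniteness of recursively enumerable languages are not semi-decidable, the same holds for $L(M')$. For \emph{universality}, I would instead invoke the well-known fact that $\lfam(\odha{k})$ is effectively closed under complement: from $M'$ I construct a $\odha{2}$ for $\invalc(M)$, which equals the full alphabet $(\{\dollar\}\cup T\cup S)^*$ exactly when $\valc(M) = \emptyset$, i.e.\ when $L(M) = \emptyset$; since emptiness of Turing machines is not semi-decidable, universality is not semi-decidable either. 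For \emph{regularity} and \emph{context-freedom}, I would appeal to the classical result of Hartmanis~\cite{Hartmanis:1967:cfltmc} already used in the previous proof: $\valc(M)$ is context-free (and a fortiori regular) if and only if $L(M)$ is finite, and the converse implication is trivial; hence both properties of $L(M')$ coincide with finiteness of $L(M)$, which is not semi-decidable.

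For \emph{inclusion} and \emph{equivalence}, I would reduce from emptiness (or universality). Fix any $\odha{k}$ $N_\emptyset$ with $L(N_\emptyset)=\emptyset$ and any $N_\Sigma$ with $L(N_\Sigma)=\Sigma^*$. Then $L(M') \subseteq L(N_\emptyset)$ iff $L(M') = \emptyset$, and $L(N_\Sigma) \subseteq L(\overline{M'})$ iff $\overline{\valc(M)} = \Sigma^*$ iff $L(M) = \emptyset$, and analogously $L(M')$ is equivalent to $L(N_\emptyset)$ iff $L(M')=\emptyset$. Thus emptiness of $L(M)$ (non-semi-decidable) reduces to inclusion and to equivalence of $\odha{k}$ (hence of each stronger class).

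The main obstacle is really just bookkeeping: making sure that the witness language lives in $\lfam(\odha{2})$ so that the reduction automatically transfers to the three stronger classes, and that complementation is available where the target property requires it (universality, regularity, context-freedom cleanly go through $\valc(M)$ itself, while universality for the \emph{nondeterministic} classes could alternatively be handled via $\invalc(M)$, which is also in $\lfam(\odha{2})$ by closure under complement). Once those fits are checked, every item in the list follows from a single-sentence reduction to a non-semi-decidable Turing machine property.
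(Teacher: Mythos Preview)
Your proposal is correct and follows exactly the route the paper itself indicates: the paper does not spell out a proof but states the theorem as ``another consequence of the fact that the set of valid computations is accepted by $k$-head finite automata'' and that ``we can transfer the results from Turing machines.'' Your reductions via $\valc(M)\in\lfam(\odha{2})$ (and $\invalc(M)$ for universality), together with the Hartmanis characterization of when $\valc(M)$ is context free, are precisely the intended transfers, and the containment $\lfam(\odha{2})\subseteq\lfam(\odha{k}),\lfam(\onha{k}),\lfam(\tdha{k}),\lfam(\tnha{k})$ propagates all eight non-semi-decidabilities to the four classes as you note.
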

\begin{thm}
Let $k\geq 2$. 
Then any language family whose word problem is semi-decidable and
that effectively contains the language families $\lfam(\odha{k})$, $\lfam(\onha{k})$,
$\lfam(\tdha{k})$, or $\lfam(\tnha{k})$ 
does not possess a pumping lemma.
\end{thm}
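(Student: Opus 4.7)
The plan is to argue by contradiction: suppose $\mathcal{L}$ is a language family whose word problem is semi-decidable, which effectively contains one of the four families (say $\lfam(\odha{k})$), and which is additionally equipped with a pumping lemma. From these three ingredients I will assemble a semi-decision procedure for infiniteness of $\lfam(\odha{k})$, contradicting Theorem~\ref{theo:non-semidecidable}.

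First I fix the notion of pumping lemma in the form shared by the classical pumping lemmas for regular and context-free languages: for every descriptor $D$ of $\mathcal{L}$ there is a constant $n_D$, effectively computable from $D$, such that every $w \in L(D)$ with $|w|\ge n_D$ admits a decomposition $w=xyz$ with $|xy|\le n_D$, $y\neq\lambda$, and $xy^iz \in L(D)$ for all $i\ge 0$. Two features of this formulation are what the proof relies on: the pumping constant is recoverable from the descriptor by an effective procedure, and the length of the pumped factor is bounded, so pumping down shortens a word by at most $n_D$ symbols.

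The key combinatorial observation is that under such a pumping lemma $L(D)$ is infinite if and only if $L(D)$ contains some word whose length lies in the finite window $[n_D, 2n_D)$. Indeed, any $w\in L(D)$ with $|w|\ge n_D$ yields infinitely many members $xy^iz\in L(D)$ by pumping up. Conversely, if $L(D)$ is infinite then it contains a shortest word $w^*$ of length at least $n_D$; should $|w^*|\ge 2n_D$ hold, pumping down would produce a word of length $|w^*|-|y|\ge 2n_D-n_D=n_D$ that is strictly shorter than $w^*$, contradicting minimality.

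From here the semi-decision procedure writes itself: given an $\odha{k}$ automaton $M$, use effective containment to produce $D\in\mathcal{L}$ with $L(D)=L(M)$, compute $n_D$, enumerate the finitely many words of length in $[n_D, 2n_D)$, and dovetail the semi-decision procedures for their membership in $L(D)$. Output \emph{infinite} as soon as one of them accepts. If $L(M)$ is infinite, some such word lies in $L(D)$ and the corresponding semi-decider halts; if $L(M)$ is finite, no word of length $\ge n_D$ can be in $L(D)$ (else pumping would force infiniteness), so no semi-decider halts, which is exactly the acceptable behaviour on a no-instance. This contradicts Theorem~\ref{theo:non-semidecidable}, and the argument transfers verbatim to $\onha{k}$, $\tdha{k}$, and $\tnha{k}$. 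The main obstacle is conceptual rather than technical: one has to pin down the ``right'' formulation of pumping lemma so that effectiveness of $n_D$ and boundedness of $|xy|$ are both in force. Once that is done, the contradiction falls out cleanly from the dovetailing.
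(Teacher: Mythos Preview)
Your argument is correct and is the standard one; the paper itself states this theorem without proof, relying implicitly on exactly the contradiction with Theorem~\ref{theo:non-semidecidable} that you spell out.

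One simplification worth noting, since you flag the formulation of ``pumping lemma'' as the main obstacle: the bound $|xy|\le n_D$ and the finite window $[n_D,2n_D)$ are not actually needed. Simply dovetail the membership semi-deciders over \emph{all} words of length at least $n_D$. If $L(D)$ is infinite, some such word lies in $L(D)$ and its semi-decider eventually accepts; if $L(D)$ is finite, then no word of length $\ge n_D$ can lie in $L(D)$, since pumping it up would produce infinitely many members. Thus the only features of a pumping lemma you need are an effectively computable constant $n_D$ and the pump-up clause, which makes the theorem apply to the broadest reasonable reading of ``possesses a pumping lemma'' and dissolves the conceptual obstacle you mention.
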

Finally, note that no minimization algorithm for the
aforementioned devices exists, since otherwise emptiness becomes
decidable, which contradicts 
Theorem~\ref{theo:non-semidecidable}.
\begin{thm}
There is no minimization algorithm converting some
$\odha{k}$, $\onha{k}$, $\tdha{k}$, or $\tnha{k}$,
for $k\geq 2$, to an equivalent automaton of the same
type with a minimal number of states.
\end{thm}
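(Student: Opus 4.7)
The plan is to argue by contradiction. Suppose that for some $X \in \{\odha{k}, \onha{k}, \tdha{k}, \tnha{k}\}$ with $k \geq 2$ there is an algorithm $\mathcal{A}$ that, given an $X$-automaton $M$, returns an equivalent $X$-automaton $\mathcal{A}(M)$ whose number of states is minimum among all $X$-automata accepting $L(M)$. From $\mathcal{A}$ I would build a decision procedure for emptiness on $\lfam(X)$, contradicting Theorem~\ref{theo:non-semidecidable}.

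The starting observation is that the empty language is accepted by the trivial one-state $X$-automaton whose sole state is non-accepting, so the minimum number of states required to accept $\emptyset$ is exactly $1$. For any fixed input alphabet $A$ and fixed $k$, let $\mathcal{S}_{A,k}$ denote the (finite) set of all one-state $X$-automata over $A$. Each element of $\mathcal{S}_{A,k}$ has some definite language, which is either empty or nonempty; in particular every one-state automaton with a non-accepting state, and possibly a few with an accepting state that loop on every input (in the two-way case), accept $\emptyset$, while the rest do not.

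The decision procedure then reads as follows. Given $M$ over alphabet $A$, compute $M' = \mathcal{A}(M)$. If $|M'| > 1$, output \emph{nonempty}; this is correct because $L(M) = \emptyset$ would force $|M'| = 1$ by the minimum observation above. If $|M'| = 1$, output the emptiness verdict of $M'$ by consulting a hardwired finite table listing the verdict of every element of $\mathcal{S}_{A,k}$. Correctness in this branch is immediate from the definition of $\mathcal{A}$, since $L(M) = L(M')$. Hence emptiness would be decidable for $\lfam(X)$, contradicting Theorem~\ref{theo:non-semidecidable}. The argument is uniform across the four types $X$, so the theorem follows simultaneously in all four cases.

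The main conceptual obstacle, and the step someone might question, is the appeal to a precomputed finite lookup table of emptiness verdicts for single-state $X$-automata, since emptiness for $\lfam(X)$ is itself not semi-decidable by Theorem~\ref{theo:non-semidecidable}. This is nevertheless a legitimate move: for any fixed $A$ the set $\mathcal{S}_{A,k}$ is finite, and each of its members has a definite true/false verdict; a finite function from a finite domain into $\{\textit{true},\textit{false}\}$ can always be hardcoded into an algorithm, regardless of whether the uniform version of the problem it answers is decidable. With this routine piece of computability in place, the contradiction is immediate.
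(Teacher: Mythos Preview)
Your proposal is correct and follows exactly the route the paper sketches in one line---that a minimization algorithm would make emptiness decidable, contradicting Theorem~\ref{theo:non-semidecidable}---and you have supplied the details the paper leaves implicit. One small point worth making explicit: your hardwired-table step requires the input alphabet $A$ (not just $k$) to be fixed, so the contradiction you reach is with non-semi-decidability of emptiness over a \emph{fixed} alphabet; this is true by a routine encoding argument, but deserves a word since the paper's $\valc(M)$ construction, as written, lets the alphabet vary with the Turing machine~$M$.
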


\section{Alternative and Intermediate Computational Models}

\noindent
The existence of non-recursive trade-offs and the undecidability of many
decidability problems for multi-head finite automata is at least disconcerting
from an applied perspective.
So, the question arises under which assumptions
undecidable questions become decidable. Here we focus on
three different alternative and intermediate
computational models, namely (i) multi-head automata accepting bounded
languages, (ii) data-independent or oblivious multi-head finite automata, and
(iii) parallel communicating finite automata. 

\subsection{Multi-Head Finite Automata Accepting Bounded Languages}

\noindent
If we impose the
structural restriction of ``boundedness,'' then for context-free grammars and
pushdown automata it is known that the trade-offs become recursive and
decidability questions become decidable~\cite{malcher:2007:dcobcl}.
In this section, we summarize corresponding results for multi-head finite
automata accepting bounded languages. 

\begin{defn}
Let $A=\{a_1,a_2,\dots,a_n\}$.
A language $L\subseteq A^*$ is said to be \emph{letter-bounded} or
\emph{bounded} if $L \subseteq
a_1^*a_2^* \ldots a_n^*$.
A subset $P \subseteq \mathbb{N}^n$ is said to be a \emph{linear set} if there
exist $\alpha_0, \alpha_1, \ldots, \alpha_m$ in $\mathbb{N}^n$ such that
$
P=\{\,\beta \mid \beta = \alpha_0 + \sum_{i=1}^m a_i \alpha_i,\mbox{ where 
$a_i \geq 0$,} \mbox{ for } 1 \le i \le m\,\}
$
and~$P$ is said to be \emph{semilinear} if it is the finite
union of linear sets.
\end{defn}

In order to relate bounded languages and semilinear sets we introduce the following
notation. 
For an alphabet $A=\{a_1,a_2,\dots,a_n\}$ the \emph{Parikh mapping}
$\Psi:\Sigma^*\to \mathbb{N}^n$ is defined by
$\Psi(w)=$\linebreak$(|w|_{a_1},|w|_{a_2},\dots, |w|_{a_n})$, where
$|w|_{a_i}$ denotes the number of occurrences of~$a_i$ in the word~$w$.
We say that a language is \emph{semilinear} if its Parikh image is a semilinear set.
In~\cite{Parikh:1966:ocfl} a fundamental result concerning the distribution of
symbols in the words of a context-free language has been shown. It says that for any
context-free language $L$, the Parikh image $\Psi(L)=\{\,\Psi(w)\mid w\in L\,\}$ is
semilinear.
Semilinear sets have many appealing properties. For example, they are closed under
union, intersection, and complementation
\cite{Ginsburg:1966:MTCFL}. Furthermore, bounded
semilinear languages have nice decidability properties, since the questions of emptiness, universality, 
finiteness, infiniteness, inclusion, and equivalence are decidable (cf.~\cite{Ginsburg:1966:MTCFL}).
The main connection between bounded languages and multi-head finite automata
is the following result~\cite{ibarra:1974:nslsbrmhpda,Sudborough:1974:brmfal}.

\begin{lem}\label{lemma:char}
Let $k\geq 1$ and $L$ be a bounded language
accepted by a $\tnha{k}$ which performs a constant number of head
reversals. Then $\Psi(L)$ is a semilinear set. 
\end{lem}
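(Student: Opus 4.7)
The plan is to reduce the claim to the classical theorem that reversal-bounded nondeterministic multi-counter machines accept languages whose Parikh image is semilinear. Since $L \subseteq a_1^*a_2^*\cdots a_n^*$, the image $\Psi(L)$ coincides with the set of tuples $(i_1,\ldots,i_n) \in \mathbb{N}^n$ such that $a_1^{i_1}\cdots a_n^{i_n} \in L$, so it suffices to present $L$ as the language of such a machine.

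From the given $\tnha{k}$ $M$, I would build a nondeterministic multi-counter machine $M'$ that reads an input of the form $a_1^{i_1}\cdots a_n^{i_n}$ in a single left-to-right pass, recording the block lengths $i_1,\ldots,i_n$ on $n$ designated counters. After this monotone recording phase, $M'$ simulates $M$: because the input is bounded, every head position decomposes into a pair (block index, offset within the block), where the block index lives in the finite control of $M'$ (at the cost of multiplying the state set by $(n+2)^k$) and the offset resides on a counter dedicated to that head. Each step of $M$ becomes an increment or decrement of an offset counter, and each boundary crossing becomes a state update together with a reset of the offset to $0$ or to the relevant $i_j$. If $M$ performs at most $r$ reversals per head, then between consecutive head reversals the head moves monotonically and crosses each of the $n-1$ block boundaries at most once in that segment, so each offset counter reverses only $O(rn)$ times in total, a constant independent of the input.

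The main obstacle is the boundary test itself: to detect that head $h$ has reached the right end of block $j$, $M'$ must compare the offset counter with the stored value $i_j$ without spoiling the latter and without incurring unboundedly many reversals on the $i_j$-counter. I would resolve this by pre-copying the stored values during the initial reading pass, producing on separate counters as many copies of each $i_j$ as there can be boundary crossings involving it, a number bounded by $O(rk)$; each such copy is decremented exactly once in lockstep with an offset counter and then discarded, contributing only one reversal per counter. Consequently every counter of $M'$ is reversal-bounded by a constant depending only on $r$, $k$, and $n$, and the standard theorem that such machines accept languages with semilinear Parikh images yields that $\Psi(L)$ is semilinear.
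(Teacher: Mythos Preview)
The paper does not supply its own proof of this lemma; it is stated as a known result with citations to Ibarra (1974) and Sudborough (1974), and the surrounding text even remarks that Ibarra established it for the stronger model of reversal-bounded two-way multi-head \emph{pushdown} automata. Your reduction to one-way reversal-bounded nondeterministic multi-counter machines is precisely the standard technique underlying these cited results: on a bounded input each head position decomposes as a pair (block index, offset), the block indices live in the finite control, the offsets live on counters, and the bound on head reversals together with the fixed number $n$ of block boundaries bounds the total number of counter reversals by a constant in $r$, $k$, $n$. The pre-copying device you invoke to perform boundary tests without destroying the stored block lengths is the customary workaround, and your accounting of the number of copies needed (each boundary is crossed $O(rk)$ times overall) is correct; the same bound covers the copies consumed when an offset must be reloaded with $i_j$ upon entering block~$j$ from the right, a point you do not spell out but which fits under the same estimate. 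So your sketch is a faithful reconstruction of the literature proof; there is simply nothing further in the paper itself to compare against beyond the citation.
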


The above characterization implies immediately that there are no one-way multi-head
finite automata which accept the non-semilinear language $L_1=\{\,a^{2^n} \mid n \ge
1\,\}$. 
The situation changes for unbounded languages. 
For example, the non-semilinear language $L_2=\{\,aba^2ba^3b \ldots a^nb \mid n \ge 1\,\}$ is
accepted by some $\odha{2}$ by checking that the size of adjacent $a$-blocks
is always increasing by one. 
Due to the relation to semilinear sets and their positive decidability results
we obtain the borderline between decidability and undecidability as follows.

\begin{thm}
Let $k\geq 2$.
The problems of
emptiness, universality, finiteness, infiniteness, inclusion, and equivalence are undecidable for
the language families $\lfam(\odha{k})$, $\lfam(\onha{k})$, $\lfam(\tdha{k})$, and
$\lfam(\tnha{k})$,
if the automata are allowed to perform at most a constant number of head
reversals.
The problems become decidable for bounded languages.
\end{thm}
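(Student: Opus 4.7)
My plan is to handle the two halves of the theorem separately. For the undecidability part, I would first inspect the construction in Example~\ref{exa:twoheads-accept-valid} and verify that the $\odha{2}$ accepting $\valc(M)$ moves both heads only to the right: after comparing a pair of adjacent configurations $w_i\dollar w_{i+1}$, each head sits on (or next to) a separator and needs only to step rightward over the next $\dollar$ to begin the next comparison. This gives a one-way construction with zero head reversals, which trivially respects any constant reversal bound. Consequently every reduction underlying Theorem~\ref{theo:non-semidecidable} carries over verbatim to the constant-reversal setting, so emptiness, finiteness, infiniteness, universality, inclusion, and equivalence remain undecidable for $\odha{k}$, $\onha{k}$, $\tdha{k}$, and $\tnha{k}$ with $k\geq 2$ under the reversal restriction.

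For the decidability part, let $M$ be a $k$-head, constant-reversal automaton of any of the four types accepting a bounded language $L(M)\subseteq a_1^*a_2^*\cdots a_n^*$. The central step is to invoke Lemma~\ref{lemma:char} effectively, producing from $M$ an explicit finite list of generators of a semilinear set equal to $\Psi(L(M))$. Since the Parikh mapping restricts to a bijection from $a_1^*a_2^*\cdots a_n^*$ onto its image, a bounded language is completely determined by its Parikh image, so each problem in the list translates directly into a question about effectively presented semilinear sets: emptiness, finiteness, and infiniteness can be read off the generators; universality reduces to equality with the (easily described) Parikh image of $a_1^*a_2^*\cdots a_n^*$; inclusion reduces to testing emptiness of $\Psi(L_1)\cap\overline{\Psi(L_2)}$, available because semilinear sets are effectively closed under Boolean operations; and equivalence is two inclusions.

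The main obstacle is turning Lemma~\ref{lemma:char} into a genuinely effective procedure: the mere existence of a semilinear Parikh image must be upgraded to an algorithm that, from $M$ and its reversal bound, outputs concrete generators. I would follow the scheme of the proofs in~\cite{ibarra:1974:nslsbrmhpda,Sudborough:1974:brmfal}, cutting every accepting computation into finitely many one-way phases dictated by the reversal bound, describing the lengths consumed in each phase by a Presburger formula over the automaton's state-crossing behaviour, and then applying quantifier elimination to produce the generators. Once this constructive semilinear description of $\Psi(L(M))$ is in hand, the remaining decidability claims reduce to standard algorithms on semilinear sets.
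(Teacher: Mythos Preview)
Your proposal is correct and mirrors the paper's approach. The paper itself gives no explicit proof: it simply states the theorem as a consequence of the preceding discussion, namely that the $\odha{2}$ of Example~\ref{exa:twoheads-accept-valid} for $\valc(M)$ is one-way (hence zero reversals) so Theorem~\ref{theo:non-semidecidable} already yields the undecidability half, and that Lemma~\ref{lemma:char} together with the cited decidability properties of semilinear sets from~\cite{Ginsburg:1966:MTCFL} yields the decidability half. Your write-up makes explicit two points the paper leaves implicit---that $\Psi$ is injective on $a_1^*\cdots a_n^*$ so the Parikh image determines the bounded language, and that the semilinear representation in Lemma~\ref{lemma:char} must be obtained effectively---but these are elaborations of the same argument rather than a different route.
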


We next summarize results concerning decidability questions on models
which are generalized or restricted versions of multi-head finite
automata. Let us start with nondeterministic two-way multi-head pushdown
automata~\cite{ibarra:1974:nslsbrmhpda} which are nondeterministic two-way multi-head finite
automata augmented by a pushdown store.
In fact, the characterization in Lemma~\ref{lemma:char} has been shown 
in~\cite{ibarra:1974:nslsbrmhpda} for the stronger model of nondeterministic 
two-way multi-head pushdown automata. So, 
for $k\geq 1$ let a bounded language $L$ be accepted by some nondeterministic 
two-way $k$-head pushdown automaton
which performs a constant number of head reversals. Then $\Psi(L)$ is a
semilinear set.

Thus, the positive decidability results obviously hold also for such automata.
As is the case for multi-head finite automata, we loose positive decidability
and semilinearity if we drop the condition of boundedness. We may consider
again the language~$L_2$ which is clearly accepted by a one-way two-head
pushdown automaton as well.

In the general model of multi-head finite automata the moves of the heads are
only depending on the input, and a head cannot feel the presence of another
head at the same position at the same time. If this property is added to each
head, we come to the model of one-way multi-head finite automata having \emph{sensing heads}
(see~\cite{ibarra:1974:nslsbrmhpda}). This model is stronger since, for
example, non-semilinear language\linebreak $\{\,a^{n^2} \mid n \ge 1\,\}$ can be accepted by a
deterministic one-way finite automaton having three sensing heads,
whereas with non-sensing heads
only semilinear languages are accepted in the unary case
(see~Lemma~\ref{lemma:char}). 
On the other hand, it is shown in~\cite{ibarra:1974:nslsbrmhpda} that every
bounded language which is accepted by some nondeterministic one-way finite
automaton having two sensing heads can be also accepted by some
nondeterministic one-way pushdown automaton with two heads, and thus is
semilinear. 
\begin{thm}
The problems of emptiness, universality, finiteness, infiniteness, inclusion,
and equivalence are undecidable for deterministic one-way finite automata with
at least three sensing heads.
The aforementioned problems become decidable for nondeterministic one-way finite automata
with two sensing heads that accept a bounded language.
\end{thm}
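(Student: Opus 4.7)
The plan is to handle the two halves of the theorem separately via routes already prepared in the paper.

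For the undecidability part, I would exploit that sensing heads subsume non-sensing heads (each sensing head can simply ignore the coincidence information) and that an unused parked head costs nothing. Thus any $\odha{2}$ can be simulated by a deterministic one-way finite automaton with three sensing heads. Applying this to the $\odha{2}$ from Example~\ref{exa:twoheads-accept-valid} shows that $\valc({M})$ is accepted by such a device for every Turing machine~${M}$. The standard reductions underlying Theorem~\ref{theo:non-semidecidable} then transfer non-semi-decidability: $\valc({M})$ is empty (respectively, finite) iff $L({M})$ is, and, using closure of the deterministic class under complementation to realize $\invalc({M})$ as well, we obtain the corresponding reductions for universality, inclusion, and equivalence against fixed trivial reference languages.

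For the decidability part, I would chain three ingredients recalled in the text just before the theorem. First, by the simulation from~\cite{ibarra:1974:nslsbrmhpda}, any bounded language accepted by a nondeterministic one-way two-sensing-head finite automaton is accepted by an effectively obtainable nondeterministic one-way two-head pushdown automaton. Second, by the extension of Lemma~\ref{lemma:char} to nondeterministic multi-head pushdown automata (also~\cite{ibarra:1974:nslsbrmhpda})---noting that a one-way device trivially performs zero head reversals---the accepted bounded language has an effectively computable semilinear Parikh image. Third, bounded semilinear languages have decidable emptiness, universality, finiteness, infiniteness, inclusion, and equivalence by the classical results of~\cite{Ginsburg:1966:MTCFL}; since a bounded word is uniquely determined by the exponents of its letters, the induced bijection between a bounded language and its Parikh image lifts these decisions back to the input automata.

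The main obstacle I expect is not conceptual but rather verifying \emph{effectiveness} along this chain: the Ibarra simulation of two sensing heads by a two-head pushdown and the constructive version of Parikh's theorem for bounded multi-head pushdown languages must each produce algorithmic outputs from algorithmic inputs, rather than mere existence statements. All ingredients are present in the cited works, and each individual step is well understood, but the bookkeeping needed to thread them together into actual decision procedures is the main technical burden of the argument.
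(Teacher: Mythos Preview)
Your proposal is correct and tracks the paper's own (implicit) argument closely. The paper does not supply a formal proof for this theorem; the surrounding text \emph{is} the argument, and your decidability half reproduces it exactly: Ibarra's simulation of two sensing heads by a one-way two-head pushdown automaton, then semilinearity of bounded languages for such devices (the extension of Lemma~\ref{lemma:char}), then Ginsburg's decidability results for bounded semilinear languages. Your attention to effectiveness along this chain is appropriate and is indeed the only real work.

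For the undecidability half the paper says nothing beyond the preceding context, so there is no ``paper's proof'' to compare against; your route via embedding $\odha{2}$ into sensing-head automata and invoking $\valc(M)$ together with the reductions behind Theorem~\ref{theo:non-semidecidable} is sound and is the natural choice. One remark worth making explicit: your argument already yields undecidability for \emph{two} sensing heads on arbitrary inputs (the third, parked head is unnecessary), so under your reading the bound ``at least three'' is not tight. The reason the paper highlights three heads is the observation just before the theorem that three sensing heads accept the non-semilinear unary language $\{a^{n^2}\mid n\ge 1\}$, i.e., three sensing heads break the semilinearity route even on bounded inputs. If one reads the first sentence of the theorem as implicitly restricted to bounded languages---consistent with the section's theme and with the two-versus-three dichotomy---then your $\valc$ encoding no longer applies as written, and one would instead need a bounded (e.g., unary) encoding of computations that exploits the sensing capability. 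As the statement is literally phrased, however, your argument suffices.
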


The computational capacity of two-way finite automata with sensing heads and
bounds on the number of input reversals is investigated
in~\cite{hromkovic:1985:fatnmawrnr}.
We next consider a restricted version of one-way multi-head finite automata
which allows only one designated head to read and distinguish input symbols
whereas the remaining heads cannot distinguish input symbols, that is, the
remaining heads get only the information whether they are reading an input
symbol or the right endmarker. These automata are called
\emph{partially blind} multi-head finite automata~\cite{ibarra:2006:opbmfa}.
In this model the decidability results are much better, since it can be shown
that every language, not necessarily bounded, which is accepted by a partially
blind nondeterministic one-way multi-head finite automaton has a semilinear Parikh
map. In detail, we have the following theorem.

\begin{thm}[\cite{ibarra:2006:opbmfa}]
(1) The problems of emptiness, finiteness, and infiniteness are decidable
for partially blind nondeterministic one-way multi-head finite automata.
Universality is undecidable and, thus, the problems of inclusion and
equivalence are undecidable as well.
(2) The language family accepted by partially blind deterministic
one-way multi-head finite automata is closed under
complementation. Thus, the problems of universality, inclusion, and
equivalence are decidable.
(3) The problems of emptiness, universality, finiteness, infiniteness,
inclusion, and equivalence are decidable
for partially blind nondeterministic one-way multi-head finite automata
that accept a bounded language.
\end{thm}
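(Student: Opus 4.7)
My plan is to tackle the three parts using two main tools: the semilinearity of the Parikh image of languages accepted by partially blind nondeterministic one-way multi-head finite automata, stated in the paragraph immediately preceding the theorem, and a direct closure argument for the deterministic case.

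For part~(1), I would first extract, from a given partially blind nondeterministic one-way multi-head finite automaton, an effective semilinear representation of the Parikh image of the accepted language~$L$. Emptiness of $L$ is then equivalent to emptiness of this set, and finiteness and infiniteness reduce to inspecting whether any period of the semilinear description is nontrivial. For the undecidability of universality, I would reduce from the non-semi-decidable halting problem for deterministic Turing machines: given~${M}$, construct a partially blind nondeterministic one-way multi-head finite automaton that accepts $\invalc({M})$. Universality of the accepted language is then equivalent to $L({M})=\emptyset$, which yields undecidability. Inside this construction, the single reading head scans for structural defects or a guessed discrepancy, while the remaining partially blind heads act as position counters that align the comparison of adjacent configurations, with nondeterminism selecting both the guilty pair of configurations and the discrepant symbol position. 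Undecidability of inclusion and equivalence follows because universality reduces to $A^* \subseteq L$.

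For part~(2), I would argue closure under complementation for partially blind deterministic one-way multi-head finite automata by first showing that such machines can be taken to halt on every input. Since the device is one-way, any nontermination consists of a block of stationary moves on all heads; by finiteness of the state set such a block must cycle within $|S|$ steps and can be detected via a bounded auxiliary counter coded into the state and rerouted to a rejecting halting state. Once the machine is deterministic, complete, and halting, complementation is achieved by swapping accepting and non-accepting states. Universality then reduces to emptiness of the complement, which is decidable by part~(1), and inclusion and equivalence reduce via the standard product construction, which preserves partial blindness and determinism.

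For part~(3), I would again invoke semilinearity: because $L \subseteq a_1^* a_2^* \cdots a_n^*$ is bounded, it is completely determined by its Parikh image, and the universe $a_1^* a_2^* \cdots a_n^*$ also has an effectively available semilinear Parikh image. Universality thus reduces to equality of two effectively given semilinear sets, which is decidable; inclusion and equivalence reduce similarly via Boolean operations on semilinear sets followed by emptiness tests. The step I expect to be the main obstacle is the undecidability of universality in part~(1): one must design the ``invalid computation'' automaton so that the single reading head is the only one ever distinguishing tape symbols, while the remaining heads---being blind to content---serve purely as position markers for aligning the successor-configuration check, and one must verify that guessing and testing the discrepancy can be carried out within this restriction rather than with the usual two-head sensing trick.
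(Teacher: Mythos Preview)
The survey does not prove this theorem at all: it is quoted from~\cite{ibarra:2006:opbmfa} with no accompanying argument, so there is no in-paper proof to compare your proposal against.

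Much of your plan is sound---effective semilinearity of the Parikh image yields emptiness, finiteness and infiniteness in~(1) and all of~(3), and your halting-normal-form argument gives closure under complementation and hence decidable universality in~(2). Two steps, however, do not go through as written. The first is the one you already flag: accepting $\invalc({M})$ with a single reading head and only blind auxiliary heads. A blind head can be tested only for ``at the right endmarker'', so it behaves as a counter comparable solely to the total input length~$n$. To match position~$i$ of~$w_j$ against position~$i$ of~$w_{j+1}$ the reading head must advance exactly $|w_j|+1$ cells between the two reads; this offset is a local block length, not expressible as any test against~$n$, and nondeterminism does not supply the missing \emph{verification} of alignment. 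Your ``blind heads as position counters'' idea therefore does not survive the partial-blindness restriction, and the reduction in~\cite{ibarra:2006:opbmfa} must proceed along a different route. The second gap is in~(2): the ``standard product construction'' of two partially blind deterministic automata produces a machine with \emph{two} reading heads, one per factor, and these cannot be merged into one because the components may advance their reading heads at different, input-dependent rates. The resulting product is then an ordinary two-reading-head one-way device, for which emptiness is not even semi-decidable (Theorem~\ref{theo:non-semidecidable}), so this path to inclusion and equivalence is blocked; a separate argument---either a direct closure of the class under intersection or an independent characterisation---is required.
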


\subsection{Data-Independent Multi-Head Finite Automata}

\noindent
The notion of oblivious Turing programs was introduced in~\cite{Paterson:1974:ioaolm}.
Recently, obliviousness was investigated in relation with computational models from classical
automata theory as, for example, multi-head finite automata in~\cite{holzer:2002:mhfadidd}. 
\begin{defn}
  A $k$-head finite automaton~$M$ is \emph{data-in\-de\-pen\-dent} or \emph{oblivious} if
  the position of every \emph{input-head}~$i$ after
  step~$t$ in the computation on input~$w$ is a function
  $f_M({|w|},i,t)$ that only depends on~$i$, $t$,~and~$|w|$.
\end{defn}

We denote deterministic (nondeterministic) one-way and two-way
data-in\-de\-pen\-dent $k$-head finite automata by
$\odidha{k}$ and $\tdidha{k}$ ($\odinha{k}$ and $\tdinha{k}$). 
In~\cite{Petersen:1998:hhofapac} it was shown that
data-independence is no restriction for multi-head one-counter
automata, multi-head non-erasing stack automata, and multi-head stack
automata. The same is obviously true for one-head finite automata.

Interestingly, one can show that for data-independent $k$-head finite automata
determinism is as powerful as nondeterminism. To this end, observe
that a data-independent automaton with~$k$ heads has, on inputs of length~$n$,
only one possible input-head trajectory during a computation.
So, the only nondeterminism left in the
$k$-head nondeterministic finite automata is the way in which the next state is
chosen. Therefore, a power-set construction shows that even data-independent
determinism may simulate data-independent nondeterminism. 
Thus, we have the following theorem.
\begin{sloppypar}
\begin{thm}\label{equivalence-for-di}
  Let $k\geq 1$.
Then 
  $\lfam(\odidha{k})=\lfam(\odinha{k})$ and
moreover $\lfam(\tdidha{k})=\lfam(\tdinha{k})$.
\end{thm}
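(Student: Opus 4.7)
Both inclusions $\lfam(\odidha{k}) \subseteq \lfam(\odinha{k})$ and $\lfam(\tdidha{k}) \subseteq \lfam(\tdinha{k})$ are immediate, since a deterministic device is a special case of a nondeterministic one. For the reverse inclusions, the plan is to flesh out the power-set construction hinted at in the paragraph preceding the theorem, producing a data-independent deterministic simulator $M'$ for any given data-independent nondeterministic $k$-head automaton $M = \langle S, A, k, \delta, \leftend, \rightend, s_0, F \rangle$ with trajectory function $f_M$.

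The simulator $M'$ has state set $2^S \times \{0, 1\}$. The first component tracks the set of $M$-states currently alive along some surviving branch; the second is a flag which is set as soon as some branch halts in an accepting state. The initial state is $(\{s_0\}, 0)$, the initial head positions are those of $M$, and $M'$ moves its heads exactly as $M$ does, so that $f_{M'} := f_M$. The transition $\delta'$ on $(Q, b)$ reading $\vec{a} = (a_1, \ldots, a_k)$ is defined as follows. Let $Q_+ = \{s \in Q \mid \delta(s, \vec{a}) \text{ is defined}\}$ and $Q' = \bigcup_{s \in Q_+} \{s^* \mid (s^*, \vec{d}) \in \delta(s, \vec{a})\}$; set the flag to $b' = 1$ if $b = 1$ or $(Q \setminus Q_+) \cap F \neq \emptyset$. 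If $Q_+ = \emptyset$, then $\delta'$ is undefined and $M'$ halts; otherwise $M'$ passes to $(Q', b')$ and moves its heads by the common vector $\vec{d}$ obtained from any of the transitions contributing to $Q'$.

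The key point is that by data-independence of $M$, every transition $(s^*, \vec{d}) \in \delta(s, \vec{a})$ used to populate $Q'$ must share the same $\vec{d}$, equal to $(f_M(n, i, t+1) - f_M(n, i, t))_{i=1}^{k}$; otherwise two surviving branches of $M$ would reach different head positions at step $t+1$, contradicting the hypothesis on $M$. Hence $\delta'$ is well-defined and single-valued, and the heads of $M'$ move in lockstep with those of $M$, so $M'$ is indeed data-independent. A routine induction on the step count $t$ shows that the first component of $M'$'s state equals the set of $M$-states reachable at time $t$, while the flag equals $1$ exactly when some branch of $M$ has already halted in $F$ by step $t$. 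Declaring the accepting set of $M'$ to be $F' = \{(Q, b) \mid b = 1 \text{ or } Q \cap F \neq \emptyset\}$ then yields $L(M') = L(M)$.

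The main obstacle I anticipate is the proper handling of halting, since different nondeterministic branches of $M$ may halt at different times; without the auxiliary flag the simulator could terminate before registering an accepting halt that occurred along a now-extinct branch. The flag resolves the issue cleanly. The same argument applies verbatim to both the one-way case $\odinha{k} \subseteq \odidha{k}$ and the two-way case $\tdinha{k} \subseteq \tdidha{k}$, since neither the subset construction nor the data-independence bookkeeping distinguishes head directions.
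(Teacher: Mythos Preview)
Your approach is exactly the one the paper sketches---the power-set construction exploiting the single shared head trajectory---and your write-up supplies the details the paper omits. One technical point deserves attention, however. Your simulator $M'$ halts only when $Q_+=\emptyset$, i.e., when \emph{every} surviving branch of $M$ is blocked. In the two-way setting (and even in the one-way setting, via stationary moves) it can happen that some branch of $M$ halts acceptingly at time $t_0$ while another branch loops forever; then $M$ accepts the input, your flag $b$ becomes~$1$ at step~$t_0$, but $Q_+$ never empties and $M'$ runs forever without accepting. Your discussion of ``the main obstacle'' treats only the case of branches halting at \emph{different finite} times, not the case of a non-terminating branch.

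The fix is immediate and does not disturb data-independence: declare $\delta'$ undefined on any state $(Q,1)$, so that $M'$ halts (acceptingly) as soon as the flag is raised. On two inputs of the same length $M'$ may now halt at different times, but while it runs its heads still follow $f_M$, so the trajectory remains a function of $|w|$, $i$, and $t$ as required. With this amendment your argument goes through and coincides with the paper's intended proof.
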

\end{sloppypar}

But why are data-independent multi-head finite automata that interesting?
Similarly as ordinary multi-head finite automata characterize logarithmic
space bounded Turing machine computations~\cite{Hartmanis:1972:ndscd}, 
data-independent multi-head finite automata capture
the parallel complexity class $\mathsf{NC^1}$, which 
is the set of problems accepted by log-space uniform
circuit families of logarithmic depth, polynomial size, with AND- and
OR-gates of bounded fan-in.  We have $\mathsf{NC^1}\subseteq\mathsf{L}$.
For the characterization the following observation
is useful: Every deterministic multi-head finite automaton that works on unary inputs
is already data-independent by definition. Therefore,
$
\mathsf{L^u}\subseteq\bigcup_{k\geq 1}\lfam(\tdidha{k})
$,
where $\mathsf{L^u}$ denotes the set of all unary languages from
$\mathsf{L}$. However, because of the trajectory argument, we do not
know whether the inclusion 
$\mathsf{NL^u}\subseteq\bigcup_{k\geq 1}\tdinha{k}$ holds. A positive
answer would imply $\mathsf{NL^u}\subseteq\mathsf{L^u}$ by 
Theorem~\ref{equivalence-for-di}. This, in turn, would lead 
to a positive answer of the \emph{linear bounded automaton (LBA) problem}
by translational methods~\cite{kuroda:1964:cllba}. 
In~\cite{holzer:2002:mhfadidd} the following theorem was shown.
\begin{thm}\label{NCe-by-DiDFA}
  $\mathsf{NC^1}=\bigcup_{k\geq 1}\lfam(\tdidha{k})$.
\end{thm}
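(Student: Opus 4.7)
The plan is to prove the two inclusions separately.

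For the inclusion $\bigcup_{k\geq 1}\lfam(\tdidha{k}) \subseteq \mathsf{NC^1}$, I exploit the fact that on inputs of length~$n$ the head trajectory of a data-independent automaton is completely fixed. Consequently, the transition taken at step~$t$ is a function $\tau_t:S\to S$ that depends on the input only through the symbols at the predetermined positions $f_M(n,1,t),\dots,f_M(n,k,t)$. A deterministic data-independent automaton must either halt within $|S|\cdot(n+2)^k$ steps or loop, so it suffices to compose $T=n^{O(1)}$ transition functions (extended by halted-accept and halted-reject sinks so that each $\tau_t$ is total). Each $\tau_t$ is computable from the input by a constant-depth, constant-size bounded fan-in circuit. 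The iterated product of $T$ elements of the finite monoid $S^S$ can be carried out by a balanced binary tree of depth $O(\log T)=O(\log n)$ with bounded fan-in gates, yielding a circuit of polynomial size and logarithmic depth. Applying a final constant-size gate to test membership in~$F$ completes the $\mathsf{NC^1}$ circuit, and log-space uniformity follows from the fact that $f_M$ is itself log-space computable from the description of~$M$.

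For the converse inclusion $\mathsf{NC^1}\subseteq\bigcup_{k\geq 1}\lfam(\tdidha{k})$, I invoke Barrington's theorem: every language in $\mathsf{NC^1}$ is recognized by a polynomial-length width-$5$ permutation branching program over~$S_5$, which is a predetermined sequence of instructions of the form ``read bit position $\pi(\ell)$ at level $\ell$ and apply one of two permutations on the five rails according to the value read.'' The query positions $\pi(\ell)$ depend only on $\ell$ and the length $n$, not on the input. I simulate such a program by a $\tdidha{k}$ whose finite control stores the current element of~$S_5$ together with a small set of auxiliary \emph{transit} states. At each level the automaton walks a designated head from its current position to~$\pi(\ell)$ along a deterministic route, using transit states that ignore the symbols being scanned and merely advance the head; once the head reaches the target, the bit is read, the appropriate permutation is composed with the stored element, and the next level begins. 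Since head movements are scheduled solely by~$\ell$ and~$n$, the automaton is manifestly data-independent, and a constant number of heads suffices.

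The main obstacle is keeping the construction genuinely uniform and data-independent. In the forward direction, one must verify that $f_M(n,i,t)$ is log-space computable from~$M$ and that the halting/looping detection is built into the monoid product correctly. In the backward direction, the delicate point is arranging the head-transit subroutine so that the global trajectory depends only on $(n,t)$ while still keeping the total running time polynomial with only a constant number of heads; this requires that the $\mathsf{NC^1}$ circuit to branching program conversion is log-space uniform, so that the resulting sequence of target positions $\pi(\ell)$ can in turn be realized by a fixed, log-space describable head schedule.
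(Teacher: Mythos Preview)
The paper does not prove this theorem; it only quotes it from \cite{holzer:2002:mhfadidd} and immediately proceeds to corollaries. There is therefore no in-paper argument to compare against, and what follows is an assessment of your plan on its own merits.

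Your forward inclusion is sound. A data-independent $\tdha{k}$ either halts within $|S|\cdot(n+2)^k$ steps or loops, and along its unique trajectory each step induces a map $\tau_t\in S^S$ determined by the $k$ input symbols at positions $f_M(n,1,t),\dots,f_M(n,k,t)$; acceptance is then the value at $s_0$ of the iterated product $\tau_T\circ\cdots\circ\tau_1$, which is computable by a balanced binary tree of depth $O(\log n)$ over the finite monoid $S^S$. The only point to make explicit is how log-space uniformity is obtained: to wire up $\tau_t$ you must compute $f_M(n,i,t)$, and you should say that this is done by simulating $M$ on an arbitrary fixed word of length~$n$ (data-independence guarantees the same trajectory), storing only the current state and $k$ head positions.

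The backward inclusion has a real gap. You correctly invoke Barrington's theorem and identify the crux---realising the query sequence $\pi(1),\dots,\pi(L)$ with $O(1)$ heads in a data-independent manner---but you do not supply the mechanism. The sentence ``walks a designated head from its current position to~$\pi(\ell)$'' begs the question: a finite control cannot store $\ell$ (which ranges up to $n^{O(1)}$) or $\pi(\ell)$, so how does the automaton know when the head has arrived? What is missing is the following argument. Log-space uniformity of the width-$5$ program means there is a fixed log-space transducer that on input $1^n$ outputs the sequence $\pi(1),\dots,\pi(L)$ together with the associated pair of permutations at each level. By the characterisation $\mathsf{L}=\bigcup_{k\geq 1}\lfam(\tdha{k})$, this transducer can be simulated by a constant number of auxiliary heads moving over the input tape; because its computation depends only on $n$, those head movements are data-independent automatically (this is exactly the observation the paper makes just before the theorem, that deterministic multi-head automata on unary input are data-independent). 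A separate reading head is then steered to $\pi(\ell)$ by letting the auxiliary heads act as a yardstick, and the $S_5$ element in the finite control is updated. Until you spell this out, ``a constant number of heads suffices'' is an assertion rather than an argument.
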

Since there exists a log-space complete language in $\lfam(\odha{2})$,
we immediately obtain 
$\mathsf{NC^1}=\mathsf{L}$ if and only if
$\lfam(\odha{2})\subseteq\bigcup_{k\geq 1}\lfam(\tdidha{k})$.
Similarly, $\mathsf{NC^1}=\mathsf{NL}$ if and only if 
$\lfam(\onha{2})\subseteq\bigcup_{k\geq 1}\lfam(\tdidha{k})$.

What is known about the head hierarchies induced by data-independent 
multi-head finite automata?
For two-way data-independent multi-head finite automata one can profit from known
results~\cite{monien:1980:twmhaola}. Since for every $k\geq 1$
there are unary languages that may serve as witnesses
for the inclusions 
$\lfam(\tdha{k})\subset\lfam(\tdha{k+1})$ and  
$\lfam(\tnha{k})\subset\lfam(\tnha{k+1})$, we obtain the following
corollary.
\begin{cor}
Let $k\geq 1$. Then 
$\lfam(\tdidha{k})\subset\lfam(\tdidha{k+1})$.
\end{cor}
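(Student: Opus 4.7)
The plan is to transport the known two-way $k$-head hierarchy of Monien into the data-independent setting by exploiting the observation, already recorded earlier in the paper, that every deterministic multi-head finite automaton working on a unary input is automatically data-independent. In other words, for languages $L\subseteq\{a\}^*$ the classes $\lfam(\tdha{k})$ and $\lfam(\tdidha{k})$ coincide, so the unary separations for $\tdha{k}$ transfer to $\tdidha{k}$ for free.

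Concretely, I would proceed as follows. First, the inclusion $\lfam(\tdidha{k})\subseteq\lfam(\tdidha{k+1})$ is trivial: given a $\tdidha{k}$, add a $(k+1)$st head that stays on the left endmarker throughout the computation; this head's trajectory is the constant function~$0$ and depends only on~$|w|$, $i=k+1$, and~$t$, so data-independence is preserved. Second, for strictness, invoke Monien's hierarchy (cited in the paragraph preceding the corollary): for every $k\geq 1$ there is a unary language $L_k\in \lfam(\tdha{k+1})\setminus\lfam(\tdha{k})$. Since $L_k$ is unary, the witnessing $\tdha{k+1}$ is, on each input length~$n$, fed a unique string $a^n$; its head positions after step~$t$ are therefore determined by~$n$, $i$, and~$t$ alone, so the automaton is already data-independent and $L_k\in\lfam(\tdidha{k+1})$. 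For the non-membership side we use the obvious inclusion $\lfam(\tdidha{k})\subseteq\lfam(\tdha{k})$ (data-independence is a structural restriction of determinism), which together with $L_k\notin\lfam(\tdha{k})$ yields $L_k\notin\lfam(\tdidha{k})$. Combining both parts gives the strict inclusion.

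There is no serious obstacle, since the two ingredients---Monien's unary witnesses and the automatic data-independence of deterministic unary multi-head automata---are both already on the table. The only point that deserves a line of care in a write-up is the very last implication: one must note that Monien's lower bound is stated for unrestricted $\tdha{k}$, so using it to rule out a $\tdidha{k}$ for $L_k$ requires the (immediate) inclusion $\lfam(\tdidha{k})\subseteq\lfam(\tdha{k})$ rather than any dedicated lower-bound argument for the data-independent model.
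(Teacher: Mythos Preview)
Your proposal is correct and follows exactly the route the paper takes: the paragraph preceding the corollary already records that deterministic multi-head automata on unary inputs are data-independent, and then invokes Monien's unary witnesses for $\lfam(\tdha{k})\subset\lfam(\tdha{k+1})$ to transfer the separation. Your write-up merely makes explicit the trivial inclusion and the immediate containment $\lfam(\tdidha{k})\subseteq\lfam(\tdha{k})$, which the paper leaves implicit.
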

 
The remaining inclusions $\lfam(\tdidha{k})\subseteq\lfam(\tdha{k})$, for $k\geq
2$, for two-way multi-head finite automata are related to the question of
whether $\mathsf{NC^1}=\mathsf{L}$. This parallels the issue
of whether the inclusion $\lfam(\tdha{k})\subseteq\lfam(\tnha{k})$ is proper for
$k\geq 2$. For the relationship between data-independent and
data-dependent multi-head finite automata, we find  that
$\lfam(\tdidha{k})=\lfam(\tdha{k})$, for some $k\geq 2$, 
implies $\mathsf{NC^1}=\mathsf{L}$.
Moreover, it was shown in~\cite{holzer:1998:diddcmha:diss} that the head hierarchy for
one-way data-independent multi-head finite automata is strict. Obviously,
$\reg=\lfam(\odidha{1})$ and, hence, $\reg\subset\lfam(\tdidha{2})$.
By adapting the head hierarchy result of~\cite{Yao:1978:kpobtk}
to data-independent automata the next theorem is shown.
\begin{thm}\label{thm:bound better than Yao Rivest}
Let $k\geq 1$. Then
  $\lfam(\odidha{k})\subset\lfam(\odidha{\frac{k\cdot(k+1)}{2}+3})$.
\end{thm}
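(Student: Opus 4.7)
The plan is to adapt the Yao--Rivest hierarchy separation used in Lemma~\ref{lem:positive-hierarchy} to the data-independent setting. I take as witness the language $L_{n}$ (possibly restricted to the sublanguage in which the subwords~$w_i$ all have equal length, so that block boundaries are a function of $|w|$ alone) with $n = \binom{k}{2}+1$, and argue that this language lies in $\lfam(\odidha{h})$ for $h = \binom{k+1}{2}+3 = \frac{k(k+1)}{2}+3$ but not in $\lfam(\odidha{k})$.

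For the lower bound, the containment $\lfam(\odidha{k}) \subseteq \lfam(\odha{k})$ reduces the task to the basic hierarchy of~\cite{Yao:1978:kpobtk}: $L_n$ is accepted by some one-way $k$-head finite automaton if and only if $n \leq \binom{k}{2}$. With $n = \binom{k}{2}+1$, it follows that $L_n \notin \lfam(\odha{k}) \supseteq \lfam(\odidha{k})$. If the restriction to equal-length blocks is needed, I would check that the Yao--Rivest fooling set is realisable within that sublanguage, which is routine.

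For the upper bound I would build an $\odidha{h}$ whose heads are organised into disjoint squads rather than reused across rounds as in Lemma~\ref{lem:positive-hierarchy}. The construction mimics that lemma's $k-1$-round recursion, but instead of moving a single group of heads from round to round (which requires adaptive repositioning), round $r$ is assigned its own $r+1$ dedicated heads whose trajectories depend only on $|w|$ and perform $r$ pairwise comparisons $w_i = w_{2n+1-i}$. Two additional heads handle the regular format check of the input, and a final pair of heads is devoted to the central comparison $w_n = w_{n+1}$. A direct tally gives $\sum_{r=1}^{k-1}(r+1) + 2 + 2 = (\binom{k+1}{2}-1) + 4 = h$, exactly matching the claimed bound.

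The hard part will be implementing each round without the adaptive waits at $\dollar$-separators exploited in the classical argument, since data-independence forbids a head from pausing based on input symbols. My approach is to force the two heads comparing a given pair $(w_i, w_{2n+1-i})$ to advance in strict lockstep with a fixed offset; the equal-length restriction makes this offset computable from $|w|$, so corresponding positions are read simultaneously regardless of the actual symbols. Verifying that all $k-1$ rounds can be scheduled concurrently on disjoint head groups under a single $|w|$-indexed trajectory, and that the Yao--Rivest lower bound survives the restriction, is the technical core of the argument.
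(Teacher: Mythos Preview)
Your proposal is essentially correct and matches what the paper indicates: the paper does not give a detailed proof but simply states that the theorem is shown ``by adapting the head hierarchy result of~\cite{Yao:1978:kpobtk} to data-independent automata'' (the details residing in~\cite{holzer:1998:diddcmha:diss}), and your plan---restrict $L_n$ to equal-length blocks so that all $\dollar$-positions become functions of~$|w|$, then replace head reuse across rounds by disjoint head squads whose trajectories are fully determined by~$|w|$---is precisely the natural adaptation, with your head tally $\sum_{r=1}^{k-1}(r+1)+2+2=\frac{k(k+1)}{2}+3$ landing exactly on the stated bound. The only caveat is that you reference Lemma~\ref{lem:positive-hierarchy}, which concerns the $\valc$-augmented language $L_{n,M}$; for the present theorem you want the plain Yao--Rivest language $L_n$ (the construction technique is the same, just without the lower track), and for the setup phase you should make explicit that heads can be positioned at the required block offsets data-independently by moving at fixed rational speeds relative to a reference head that runs to the right endmarker---detecting $\rightend$ depends only on~$|w|$, so this synchronisation preserves obliviousness.
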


Thus, we have an infinite proper hierarchy with respect to the number
of heads, but the bound obtained in
Theorem~\ref{thm:bound better than Yao Rivest} is not very good, especially for
small values of~$k$. In fact, a separation for the first four levels
was obtained in~\cite{holzer:1998:diddcmha:diss}, using the language
$
  L_{F,n} = \{\,a^{i\cdot F(2)}\dollar a^{i\cdot F(3)}\dollar\ldots
  \dollar a^{i\cdot F(k+1)}\mid i\geq 1\,\},
$
  where $F(j)$ is the $j$th \emph{Fibonacci number}. It holds
$L_{F,n}\in\lfam(\odidha{k})$, if and only if $n\leq \frac{k\cdot (k-1)}{2}+1$.
\begin{thm}
$\lfam(\odidha{1})\subset \lfam(\tdidha{2})\subset
  \lfam(\tdidha{3})\subset\lfam(\tdidha{4})$.
\end{thm}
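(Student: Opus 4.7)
The plan is to establish the three strict inclusions separately, with the Fibonacci language $L_{F,n}$ of the preceding paragraph providing the key witnesses. The first inclusion $\lfam(\odidha{1})\subset\lfam(\tdidha{2})$ is easy: a one-way one-head data-independent machine is just a DFA, so $\lfam(\odidha{1})=\reg$, and any nonregular language inside $\lfam(\tdidha{2})$ separates the two classes. A convenient witness is $L_{F,2}$, which by the Fibonacci iff characterization (at $k=2$, $n=2\leq\frac{2\cdot 1}{2}+1=2$) belongs to $\lfam(\odidha{2})\subseteq\lfam(\tdidha{2})$, while an elementary pumping argument shows that a two-block ratio constraint on unary lengths cannot be regular.

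For the second and third inclusions I would take $L_{F,3}$ and $L_{F,5}$, respectively. The upper bounds are immediate from the iff characterization together with the trivial inclusion $\lfam(\odidha{k})\subseteq\lfam(\tdidha{k})$: since $3\leq\frac{3\cdot 2}{2}+1=4$ and $5\leq\frac{4\cdot 3}{2}+1=7$, we get $L_{F,3}\in\lfam(\tdidha{3})$ and $L_{F,5}\in\lfam(\tdidha{4})$. For the matching lower bounds $L_{F,3}\notin\lfam(\tdidha{2})$ and $L_{F,5}\notin\lfam(\tdidha{3})$ I would adapt the combinatorial argument underlying Holzer's Fibonacci bound. The crucial observation is that on a data-independent machine the positions of all heads at every time step form a fixed schedule depending only on the input length, so for inputs of the same length the finite control sees only the sequence of symbols scanned along these predetermined trajectories. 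By a crossing-sequence analysis at the $\dollar$-boundaries of the strings $a^{i F(2)}\dollar a^{i F(3)}\dollar\ldots\dollar a^{i F(n+1)}$, I would count how many distinct values of the parameter $i$ a $k$-head fixed schedule can distinguish, and show that this count is too small when $n>\frac{k(k-1)}{2}+1$.

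The main obstacle is exactly this last step: promoting the Fibonacci lower bound from $\odidha{k}$ to $\tdidha{k}$. Two-way heads can revisit positions and, in principle, gather information that a strict left-to-right sweep cannot reproduce, so the iff characterization for $\odidha{k}$ does not transfer automatically. What rescues the argument is the data-independence assumption, which reduces the analysis to counting the number of inputs of a given parametric form that a fixed $k$-head schedule can separate, independent of whether the schedule is monotone. I would formalise this as a pigeon-hole on the tuple of states reached at each $\dollar$-crossing: with $k$ two-way heads one still obtains only a Fibonacci-type bound $\frac{k(k-1)}{2}+1$ on the number of distinguishable $a$-blocks, matching the one-way count.

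Once this two-way lower bound is in hand, each of the three strict inclusions follows by combining the straightforward upper bound from the iff characterization with the matching lower bound as above, yielding the displayed chain $\lfam(\odidha{1})\subset\lfam(\tdidha{2})\subset\lfam(\tdidha{3})\subset\lfam(\tdidha{4})$.
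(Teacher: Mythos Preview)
Your proposal has a genuine gap at precisely the point you flag as the ``main obstacle.'' You correctly note that the Fibonacci characterisation $L_{F,n}\in\lfam(\odidha{k})\Leftrightarrow n\le\binom{k}{2}+1$ is stated only for \emph{one-way} machines, and that the lower bounds $L_{F,3}\notin\lfam(\tdidha{2})$ and $L_{F,5}\notin\lfam(\tdidha{3})$ do not follow from it. Your sketch of a crossing-sequence pigeon-hole for two-way schedules does not close this gap: a two-way data-independent head may cross each $\dollar$-boundary arbitrarily often, so ``the tuple of states at each $\dollar$-crossing'' is an unbounded object, and there is no reason the resulting count should ``match the one-way count'' $\binom{k}{2}+1$ as you assert. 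That bound in the one-way case comes from a Yao--Rivest pairing argument (each of the $\binom{k}{2}$ head pairs can verify one block equality), and head reuse in the two-way setting defeats exactly that counting. The sentence in which you claim the bound carries over is the entire content of the lower bound, and it is a hope, not an argument.

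More to the point, you are attacking a harder statement than the paper intends. The surrounding text makes clear that the theorem is meant to read $\lfam(\odidha{1})\subset\lfam(\odidha{2})\subset\lfam(\odidha{3})\subset\lfam(\odidha{4})$: the preceding paragraph introduces $L_{F,n}$ expressly to refine the coarse one-way bound of the previous theorem, the iff is stated for $\odidha{k}$, and two paragraphs later the paper declares the full one-way $k$-versus-$(k{+}1)$ question \emph{open}---which would be incoherent if a two-way result were being asserted here, since the two-way hierarchy was already settled in the earlier Corollary via Monien's unary witnesses. With the intended one-way reading the proof is exactly the three instantiations you compute, $L_{F,2}$, $L_{F,3}$, $L_{F,5}$, read directly off the iff; no two-way lower bound is needed. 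And if one insisted on the literal $\tdidha$ reading, the chain already follows from that earlier Corollary, again with no Fibonacci argument required.
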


Whether the one-way hierarchy for data-independent multi-head finite automata is strict in
the sense that $k+1$ heads are better than~$k$,
is an open problem. Concerning the remaining
open inclusions the following is known.
\begin{thm}\label{thm:one-way subset two-way} 
Let $k\geq 2$. Then 
$\lfam(\odidha{k})\subset\lfam(\tdidha{k})$.
\end{thm}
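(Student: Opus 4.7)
The plan is to exhibit the mirror language $L=\{\,w\in\{a,b\}^* \mid w=w^R\,\}$ as a witness separating the two classes for every $k\geq 2$. The inclusion $\lfam(\odidha{k})\subseteq\lfam(\tdidha{k})$ itself is immediate: a one-way $k$-head data-independent automaton is already a two-way one whose trajectory never uses a leftward move, and the surplus heads in larger $k$ can be left stationary, which preserves data-independence.

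For the strict containment I would rely on two complementary facts. On the negative side, the paper has already recorded that $L$ is not accepted by any $\onha{k}$, and since $\lfam(\odidha{k})\subseteq\lfam(\onha{k})$ by definition, this forces $L\notin\lfam(\odidha{k})$. On the positive side, I would build an explicit $\tdidha{2}$ accepting $L$: on input of length $n$, first move head~2 rightward one cell per step until it lands on the right endmarker, while head~1 waits on the first input symbol; then move head~1 rightward and head~2 leftward in lockstep, using the finite control to remember whether any pair of simultaneously scanned symbols has ever disagreed. Accept if and only if no disagreement has been recorded by the time the heads cross (after roughly $n/2$ further steps). Since the head positions at every step are determined entirely by $n$ and $t$, and not by the contents of the tape, the construction is genuinely data-independent.

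Combining these observations yields $L\in\lfam(\tdidha{2})\subseteq\lfam(\tdidha{k})$ while $L\notin\lfam(\odidha{k})$, establishing the strict inclusion for all $k\geq 2$. The only point requiring a little care is to verify that the transition function's \emph{movement} component depends only on length and time: the phase change from the rightward sweep to the lockstep comparison is triggered by head~2 scanning the right endmarker, an event whose timing depends only on $n$, and all subsequent head movements have fixed directions for a number of steps that is again determined only by $n$. Symbol-dependent information flows exclusively into the state component, in particular the mismatch flag, which by definition is unconstrained by data-independence. I do not foresee any deeper obstacle beyond this bookkeeping.
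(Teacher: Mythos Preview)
Your argument is correct. The mirror language works exactly as you describe: the two-head two-way construction is data-independent because the only phase change is triggered by head~2 reading $\rightend$, an event whose timing is fixed by~$|w|$; thereafter both heads move in fixed directions until they hit the opposite endmarkers, at which point the automaton halts and accepts iff the mismatch flag was never set. (One small point: the automaton cannot sense when the heads cross, so you should let the comparison phase run until head~1 reaches~$\rightend$ and head~2 reaches~$\leftend$; every pair $(a_i,a_{n+1-i})$ is then tested twice, which is harmless.) The negative half is immediate from the paper's earlier observation that the mirror language lies outside $\lfam(\onha{k})$ for every~$k$, hence outside $\lfam(\odidha{k})$.

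As for comparison with the paper: this is a survey, and the theorem is stated without proof, the result being attributed to Holzer's thesis. The only hint the text offers is the sentence immediately following the theorem, which names the \emph{copy language} as a witness---but that sentence is aimed at the subsequent corollary about the unions $\bigcup_k\lfam(\odidha{k})\subset\bigcup_k\lfam(\tdidha{k})$, not at the per-level statement. Your choice of the mirror language is arguably cleaner for the per-level theorem, since its exclusion from the full nondeterministic one-way hierarchy is already recorded in Section~2, so the negative side needs no argument specific to data-independence or determinism.
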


Moreover, by the copy language as witness we find that one-way
data-in\-de\-pen\-dent 
languages are a proper subset of two-way
data-independent multi-head finite automata languages.  
\begin{cor}  
$\bigcup_{k\geq 1}\lfam(\odidha{k})\subset\bigcup_{k\geq 1}\lfam(\tdidha{k})=\mathsf{NC^1}$.
\end{cor}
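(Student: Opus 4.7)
The plan is to handle the two parts of the corollary separately. The equality $\bigcup_{k\geq 1}\lfam(\tdidha{k}) = \mathsf{NC^1}$ is just a restatement of Theorem~\ref{NCe-by-DiDFA}, and the inclusion $\bigcup_{k\geq 1}\lfam(\odidha{k}) \subseteq \bigcup_{k\geq 1}\lfam(\tdidha{k})$ is immediate because every one-way data-independent device is trivially a two-way one. Only the strictness of the latter inclusion requires an argument, and this is where I would produce a witness.

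As witness I would take the \emph{copy language} $L_{\mathrm{copy}} = \{ww \mid w \in \{a,b\}^*\}$. Membership in $\bigcup_{k\geq 1}\lfam(\tdidha{k})$ can be argued most economically via Theorem~\ref{NCe-by-DiDFA}, since $L_{\mathrm{copy}}$ lies in $\mathsf{NC^1}$: a uniform log-depth bounded-fan-in circuit computes the midpoint $m = n/2$ from the input length and then ANDs the parallel comparisons of the $n/2$ corresponding position pairs. Alternatively, one can exhibit an explicit $\tdidha{2}$ whose fixed trajectory runs one head at speed~$1$ and a second at speed~$1/2$, so that when the first head reaches the right endmarker the second is parked at the midpoint; one then sends the first head back to the left endmarker and finally sweeps both heads rightward in lock-step to compare the two halves symbol by symbol.

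The heart of the proof is the lower bound $L_{\mathrm{copy}} \notin \lfam(\odidha{k})$ for every $k \geq 1$, which I would establish by a crossing-sequence argument at the midpoint. Fix $n = 2m$, suppose an $\odidha{k}$ with state set $S$ decides $L_{\mathrm{copy}}$, and observe that because the heads are one-way and data-independent each head crosses the cut between positions $m$ and $m+1$ at most once, at a time determined by $n$ alone. The $k$-tuple of internal states at these crossings---the \emph{crossing sequence}---therefore ranges over a set of size at most $|S|^k$. If two distinct prefixes $w_1, w_2 \in \{a,b\}^m$ yield the same crossing sequence, then on the inputs $w_1 w_1$ and $w_2 w_1$, which share the right half and, by data-independence, have synchronous crossing schedules, the post-crossing computations coincide. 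The automaton must therefore accept both or reject both, contradicting $w_1 w_1 \in L_{\mathrm{copy}}$ and $w_2 w_1 \notin L_{\mathrm{copy}}$. We deduce $|S|^k \geq 2^m$ for every $m$, which is impossible.

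The main obstacle I expect is making the cut-and-paste step rigorous in the data-independent setting. One has to verify that after all $k$ heads have moved into the right half, the remaining computation really is a function of the right half and the crossing sequence alone, with no lingering dependence on the discarded left half. Data-independence is exactly the ingredient that makes this work: it forces the crossing times on $w_1 w_1$ and $w_2 w_1$ to coincide, so that swapping the left halves produces genuinely synchronous continuations that can be equated, rather than merely similar-looking ones.
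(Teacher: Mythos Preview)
Your overall strategy---cite Theorem~\ref{NCe-by-DiDFA} for the equality, note the trivial inclusion, and exhibit a separating language---is correct, and your choice of the copy language matches the paper's own one-line hint. Unfortunately the lower-bound argument has a real gap, and in fact $L_{\mathrm{copy}}=\{ww\mid w\in\{a,b\}^*\}$ is \emph{not} a separating witness at all: it already lies in $\lfam(\odidha{3})$. Run head~$1$ at full speed and head~$2$ at half speed while head~$3$ waits at position~$1$; when head~$1$ meets~$\rightend$ (at a time depending only on~$|w|$), head~$2$ sits on the first cell of the right half, and heads~$2$ and~$3$ then advance in lockstep comparing symbols until head~$2$ meets~$\rightend$. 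Every head movement is triggered either by a clock bit in the state or by some head reading~$\rightend$, never by an input symbol, so the trajectory is data-independent. This is exactly your two-way $\tdidha{2}$ construction with the return sweep replaced by a fresh head parked from the start---a substitution available to one-way devices.

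The flaw in your crossing-sequence argument is the tacit assumption that the $k$-tuple of states at the crossing times is determined by the left half alone. Between the first and the last crossing, some heads are already scanning the right half while others are still in the left half, so the state at each later crossing depends on symbols from \emph{both} halves. In the three-head automaton above, by the time head~$3$ crosses the midpoint the state already encodes the outcome of the entire comparison; it is ``match'' on $w_1w_1$ but ``mismatch'' on $w_2w_1$, so the pigeonhole collision you need never materialises. Data-independence pins down the crossing \emph{times}, as you note, but not the crossing \emph{states}. A witness that does work is the mirror language $\{w\in\{a,b\}^*\mid w=w^R\}$: the paper records earlier that it lies outside $\bigcup_k\lfam(\onha{k})$ and hence outside $\bigcup_k\lfam(\odidha{k})$, while it is easily placed in~$\mathsf{NC^1}$ and thus in $\bigcup_k\lfam(\tdidha{k})$ by Theorem~\ref{NCe-by-DiDFA}.
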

We close this subsection by mentioning some open problems for further research:
(1) Determine the bounds of the conversions of one-head data-independent finite automata
into one-head data-dependent deterministic, nondeterministic, and alternating finite automata and
\textit{vice versa}.  (2) Consider decidability and complexity
questions such as equivalence, non-emptiness, etc.\ for $k$-head data-independent
finite automata.  Finally, the most interesting point for research might be (3)
the one-way $k$-head hierarchy for data-independent finite automata. Is it a
strict hierarchy, in the sense that $k+1$ heads 
are better than~$k$?

\subsection{Parallel Communicating Finite Automata}

\noindent
In this section we will focus on parallel communicating finite automata systems which
were introduced in \cite{Martinvide:2002:pfascs}. 
Basically, a parallel communicating finite automata system of degree~$k$ is 
a device of~$k$ finite automata working in parallel with each other on a common one-way 
read-only input tape and being synchronized according to a universal clock.
The~$k$ automata communicate on request by states, that is, 
when some automaton enters a distinguished query state $q_i$, it is set to the current 
state of automaton~$A_i$. Concerning the next state of the sender $A_i$,
we distinguish two modes. In \emph{non-returning} mode the sender remains in
its current state whereas in \emph{returning} mode the sender is set
to its initial state. Moreover, we distinguish whether all automata are
allowed to request communications, or whether there is just one master
allowed to request communications. The latter types are called 
\emph{centralized}. 
\begin{defn}
A \emph{nondeterministic parallel communicating finite automata system of
degree~$k$} $(\pcfa(k))$ is a construct 
$\mathcal{A}=\langle \Sigma, A_1, A_2, \dots, A_k, Q, \rightend\rangle$, 
where~$\Sigma$ is the set of \emph{input symbols}, 
each\linebreak $A_i=\langle S_i, \Sigma,
\delta_i, s_{0,i}, F_i\rangle$, for $1\leq i\leq k$, is a 
\emph{nondeterministic finite automaton} with state set $S_i$, 
initial state $s_{0,i}\in S_i$, set of
accepting states $F_i\subseteq S_i$, and transition function 
$\delta_i: S_i \times (\Sigma \cup \{\lambda,\rightend\}) \to 2^{S_i}$,
 $Q=\{q_1, q_2, \dots, q_k\}\subseteq \bigcup_{1 \leq i \leq k} S_i$
 is the set of \emph{query states}, and
$\rightend\notin \Sigma$ is the \emph{end-of-input symbol}.
\end{defn}

The automata $A_1, A_2, \dots, A_k$ are called \emph{components} of the
system $\mathcal{A}$.
A \emph{configuration} \linebreak $(s_1, x_1, s_2, x_2, \dots, s_k, x_k)$ of 
$\mathcal{A}$ represents the current states $s_i$ 
as well as the still unread parts $x_i$ of the tape inscription of all 
components $1\leq i\leq k$. System $\mathcal{A}$ starts with all of 
its components scanning the first square of the tape in their initial states. 
For input word $w\in \Sigma^*$, the initial configuration is
$(s_{0,1}, w\rightend, s_{0,2}, w\rightend, \dots, s_{0,k}, w\rightend)$.
Each step can consist of two phases. 
In a first phase, all components are in non-query states and
perform an ordinary (non-communicating) step independently.
The second phase is the communication phase during which components 
in query states receive the requested states as long as the sender 
is not in a query state itself. This process is repeated until all requests 
are resolved, if possible. If the requests are cyclic, no successor configuration exists.
As mentioned above, we distinguish \emph{non-returning} communication, 
that is, the sender remains in its current state, and \emph{returning} 
communication, that is, the sender is reset to its initial state.
For the first phase, we define the successor configuration relation~$\vdash$ by 
$
(s_1, a_1y_1, s_2, a_2y_2, \dots, s_k, a_ky_k) \vdash 
(p_1, z_1, p_2, z_2, \dots, p_k, z_k),
$
if $Q\cap \{s_1,s_2,\dots,s_k\} = \emptyset$, 
$a_i\in \Sigma\cup\{\lambda,\rightend\}$, $p_i\in \delta_i(s_i,a_i)$, 
and $z_i=\rightend$ for $a_i =\rightend$ and $z_i=y_i$ otherwise,
for $1\leq i \leq k$.
For non-returning communication in the second phase, we set 
$
(s_1, x_1, s_2, x_2, \dots, s_k, x_k) \vdash 
(p_1, x_1, p_2, x_2, \dots, p_k, x_k), 
$
if, for all $1\leq i\leq k$ such that $s_i = q_j$ and $s_j\notin Q$, we
have $p_i=s_j$, and $p_r=s_r$ for all the other $r$, for $1\leq r\leq k$.
Alternatively, for returning communication in the second phase, 
we set 
$
(s_1, x_1, s_2, x_2, \dots, s_k, x_k) \vdash 
(p_1, x_1, p_2, x_2, \dots, p_k, x_k), 
$
if, for all $1\leq i\leq k$ such that $s_i = q_j$ and $s_j\notin Q$, we
have $p_i=s_j$, $p_j=s_{0,j}$, and $p_r=s_r$ for all the other $r$, $1\leq r\leq k$.

A computation \emph{halts} when the successor configuration is 
not defined for the current situation. In particular, this may happen
when cyclic communication requests appear, or when the transition
function of one component is not defined. 
The language $L(\mathcal{A})$ accepted by a $\pcfa(k)$ $\mathcal{A}$
is precisely the set of words~$w$ such that there is some computation beginning with 
$w\rightend$ on the input tape and halting with at least one component
having an undefined  transition function and
being in an accepting state. Let $\vdash^*$ denote the reflexive and
transitive closure of~$\vdash$ and set
$L(\mathcal{A}) = \{\,w\in \Sigma^* \mid 
(s_{0,1}, w\rightend, s_{0,2}, w\rightend, \dots, s_{0,k}, w\rightend)
\vdash^*$ $
(p_1, a_1y_1, p_2, a_2y_2, \dots, p_k, a_ky_k),$
such that $p_i\in F_i$ and $\delta_i(p_i,a_i)$  is undefined,
for some $1\leq i\leq k\,\}$.

If all components $A_i$ are deterministic finite automata, then
the whole system is called \emph{deterministic}, and we add the prefix D to denote
it. The absence or presence of an R in the type of the system denotes
whether it works in \emph{non-returning} or \emph{returning} mode, respectively. 
Finally, if there 
is just one component, say $A_1$, that is allowed to query for states, that
is, $S_i\cap Q=\emptyset$, for $2\leq i\leq k$, then the system is said to be
\emph{centralized}. We denote centralized systems by a C. Whenever the degree
is missing we mean systems of arbitrary degree.
In order to clarify our notation we give an example.  
\begin{exmp}
We consider the
language $\{\,w \texttt{\$} w \mid w \in \{a,b\}^+\,\}$ and show that it can be
accepted by a $\dcpcfa$ with two components. 
Thus, all types of systems of parallel communicating finite automata accept more than 
regular languages.
The rough idea of the construction is that in every time step the master
component queries the non-master component, and the non-master component reads
an input symbol. When the non-master component has read the separating 
symbol~$\texttt{\$}$, which is notified to the master with the help of primed
states, then the master component
starts to compare its input symbol with the information from the non-master
component. If all symbols up to $\texttt{\$}$ match, the input is accepted and
in all other cases rejected. The precise construction is given through the
following transition functions.

{\small
\[
\begin{array}{rcl@{\hspace{4em}}rcl@{\hspace{4em}}rcl}
 \delta_1(s_{0,1},\lambda) &=& q_2 & \delta_1(s_a,\lambda) &=& q_2&
 \delta_1(s_b,\lambda) &=& q_2\\ \delta_1(s_\mathtt{\$},\lambda) &=& q_2 &
 \delta_1(s'_a,a) &=& q_2& \delta_1(s'_b,b) &=& q_2\\
 \delta_1(s_{\rightend},\texttt{\$}) &=& accept & && & &&\\[1mm]
%\end{array}
%\]
%\[
%\begin{array}{rcl@{\hspace{4em}}rcl@{\hspace{4em}}rcl}
 \delta_2(s_{0,2},a) &=& s_a & \delta_2(s_{0,2},b) &=& s_b & &&\\
 \delta_2(s_a,a) &=& s_a & \delta_2(s_a,b) &=& s_b & \delta_2(s_a,\texttt{\$})
 &=& s_\mathtt{\$}\\ \delta_2(s_b,a) &=& s_a & \delta_2(s_b,b) &=& s_b &
 \delta_2(s_b,\texttt{\$}) &=& s_\mathtt{\$}\\ \delta_2(s_\mathtt{\$},a) &=&
 s'_a & \delta_2(s_\mathtt{\$},b) &=& s'_b& \delta_2(s_{\rightend},\rightend) &=& s_{\rightend}\\ 
\delta_2(s'_a,a) &=& s'_a & \delta_2(s'_a,b) &=& s'_b & \delta_2(s'_a,\rightend) &=& s_{\rightend}\\
 \delta_2(s'_b,a) &=& s'_a & \delta_2(s'_b,b) &=& s'_b & \delta_2(s'_b,\rightend) &=& s_{\rightend}\\
\end{array}
\]
}
\end{exmp}

For nondeterministic non-centralized devices it is shown in~\cite{Choudhary:2007:rnrpcnfae}
that returning parallel communicating finite automata systems are neither weaker nor
stronger than non-returning ones. The question whether the same equivalence is true in the 
deterministic case was answered in the affirmative
in~\cite{bordihn:2008:ccpcfa}.
To this end, the so-called \emph{cycling-token-method} is introduced.
Basically, the main problem of that method was to break
the synchronization at the beginning. Otherwise, when some component $A_{i+1}$
requests the state of $A_i$ and, thus,~$A_i$ reenters its initial state, then 
$A_i$ will request the state of~$A_{i-1}$ and so on. But these 
cascading communication requests would destroy necessary
information. 
The next lemma is shown by applying the cycling-token-method.
\begin{lem}\label{lem:rpcdfa-includes-pcdfa}
Let $k\geq 1$. 
Then $\lfam(\drpcfa(k))$ 
includes $\lfam(\dpcfa(k))$.
\end{lem}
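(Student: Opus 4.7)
The plan is to construct, given a $\dpcfa(k)$ $\mathcal{A}$, an equivalent $\drpcfa(k)$ $\mathcal{A}'$ of the same degree $k$. The essential mismatch is that when $A'_i$ queries $A'_j$ in returning mode, the sender $A'_j$ is reset to $s_{0,j}$, whereas in the simulated computation of $\mathcal{A}$ the component $A_j$ must retain its current state $s_j$; so every query executed by $\mathcal{A}'$ must be followed by a recovery procedure that reconstructs the sender's lost state. I would enlarge each simulating component to carry, in addition to a simulated state of $A_i$, an auxiliary register holding a shadow copy of the simulated state of one distinguished neighbor along a fixed cyclic order on $\{1,\ldots,k\}$. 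After $A'_j$ is reset by a query, $A'_j$ would then issue its own query to the component holding its shadow, retrieving $s_j$; that shadow-holder is in turn reset but recovers from the next component in the cycle, and so on.

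The central obstacle, explicitly flagged in the paper as the problem that the cycling-token-method is designed to solve, is exactly that this chain of recovery queries threatens to ripple around the full cycle of $k$ components, so that a single original query could, if executed naively, reset every simulated state simultaneously and destroy the information needed for the simulation. To contain the ripple I would realize each step of $\mathcal{A}$ by a block of fixed length of micro-steps of $\mathcal{A}'$, during which a rotating control token permits refreshment of the shadow registers one component at a time in a single direction around the cycle. At the end of such a block the invariant that each $A'_i$ holds both the current simulated state of $A_i$ and a correct shadow of its cyclic neighbor is re-established; since the schedule of micro-steps inside a block depends only on the position within the block, and not on the input, determinism is preserved.

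The delicate point, as the paper stresses, is the initial segment: all components begin in their initial states, so the shadows are vacuous and a premature refresh would merely propagate default values, and moreover the very first genuine simulation step cannot exploit the cycling invariant. I would therefore prepend a short bootstrapping phase in which $\mathcal{A}'$ performs only $\lambda$-moves and internal communications to populate the shadow cycle consistently, after which the blockwise simulation proceeds as described. Acceptance of $\mathcal{A}$ is then translated into acceptance of $\mathcal{A}'$ by letting $\mathcal{A}'$ halt in a component whose simulated state register is accepting for the corresponding $A_i$ with the simulated transition undefined. The remaining technicalities, namely handling $\lambda$-moves of $\mathcal{A}$, detecting cyclic query failures, and synchronizing input-head movement across the micro-steps of a block, are routine bookkeeping once the shadow-cycle invariant is established; the real intellectual work sits in designing the bootstrapping phase and verifying that the token schedule within each block keeps the invariant intact against arbitrary query patterns of $\mathcal{A}$.
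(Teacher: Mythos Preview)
Your proposal is correct and follows essentially the same approach as the paper: the paper does not give a detailed proof here but states that the lemma is shown by applying the \emph{cycling-token-method}, whose central difficulty it identifies as breaking the initial synchronization so that cascading resets do not destroy information---precisely the issue you isolate and address with your shadow-cycle and bootstrapping phase. Your sketch is in fact more explicit than the paper's own one-line reference to~\cite{bordihn:2008:ccpcfa}, but the underlying idea (a cyclic token discipline that lets each reset component recover its state from a neighbor without triggering an uncontrolled chain of resets) is the same.
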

One of the fundamental results obtained in~\cite{Martinvide:2002:pfascs}
is the characterization of the computational power of (unrestricted)
parallel communicating finite automata systems by multi-head finite automata.
\begin{thm}\label{theo:rpcdfa=pcdfa=kdha}
Let $k\geq 1$. 
Then the families $\lfam(\drpcfa(k))$, $\lfam(\dpcfa(k))$,
and $\lfam(\odha{k})$ are equal.
\end{thm}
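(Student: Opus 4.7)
The plan is to close the cycle of inclusions $\lfam(\dpcfa(k))\subseteq\lfam(\drpcfa(k))\subseteq\lfam(\odha{k})\subseteq\lfam(\dpcfa(k))$. The first inclusion is precisely Lemma~\ref{lem:rpcdfa-includes-pcdfa}, so only the latter two need explicit constructions.

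For $\lfam(\drpcfa(k))\subseteq\lfam(\odha{k})$, given a DRPCFA $\mathcal{A}=\langle\Sigma,A_1,\dots,A_k,Q,\rightend\rangle$, I would build a $\odha{k}$ $M$ whose head $i$ tracks the input position of component $A_i$ and whose finite control stores the entire tuple $(s_1,\dots,s_k)$ of component states. One step of $M$ first simulates the non-communication phase of $\mathcal{A}$ by applying each $\delta_i$ locally to $(s_i,a_{p_i})$, where $a_{p_i}$ is the symbol under head $i$; head $i$ advances by one when the transition consumes a symbol and stays in place when $A_i$ performs a $\lambda$-transition. If the resulting tuple contains query states, the ensuing communication phase, including the returning-mode resets of senders, is resolved entirely within the finite control without moving any head, iterating until all requests are settled or detected as cyclic. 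Since the components of $\mathcal{A}$ only scan the input left to right, $M$ is a legitimate one-way $k$-head DFA, and acceptance is signaled when the stored tuple represents a halting configuration with at least one component in an accepting state.

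For $\lfam(\odha{k})\subseteq\lfam(\dpcfa(k))$, given a $\odha{k}$ $M$, I would construct a DPCFA $\mathcal{A}$ whose $i$-th component $A_i$ owns head $i$ of $M$, with $A_1$ acting as master holding the current control state of $M$. A single step of $M$ is simulated by a bounded schedule of $\mathcal{A}$-steps: each non-master $A_i$ first reads the symbol under its head and memorizes it in its state; the master then successively queries $A_2,\dots,A_k$, assembling the symbol vector into its own state while the others idle via $\lambda$-transitions; finally, the master computes $M$'s transition and, through a further short round of queries, signals each component whether to advance on the next tick (by reading a symbol) or to stay (by a $\lambda$-transition), so that at the end of the round the heads are positioned exactly as after the simulated step and $A_1$ holds the next control state of $M$. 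Acceptance is arranged so that $A_1$ enters an accepting state with undefined next transition precisely when $M$ would accept.

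The main obstacle lies in this second construction: because the master can only resolve one query per step, symbol-gathering and move-dispatching each take $\Theta(k)$ steps, during which the $k-1$ non-master components must remain frozen at their current tape positions without losing synchronization. I would handle this by hard-coding a fixed round schedule into the state sets, using $\lambda$-transitions for idling and storing in each non-master's state the scanned symbol it must replay on demand, so that every round terminates in a uniform state from which the next round can begin. Verifying that no communication request becomes unintendedly cyclic, and that the acceptance conventions match up, is then routine.
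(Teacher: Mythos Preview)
Your proposal is correct and follows essentially the same route as the paper. The paper closes the same cycle: it cites Lemma~\ref{lem:rpcdfa-includes-pcdfa} for $\lfam(\dpcfa(k))\subseteq\lfam(\drpcfa(k))$, takes the inclusion $\lfam(\odha{k})\subseteq\lfam(\dpcfa(k))$ from~\cite{Martinvide:2002:pfascs}, and only argues $\lfam(\drpcfa(k))\subseteq\lfam(\odha{k})$ explicitly---by exactly your idea of storing the full state tuple $(s_1,\dots,s_k)$ in the finite control and letting head~$i$ track component~$A_i$'s position. Your additional explicit construction for $\lfam(\odha{k})\subseteq\lfam(\dpcfa(k))$ (master gathers symbols by querying, computes the transition, then dispatches moves via a further round of queries) is more than the paper itself spells out, but it is the standard simulation underlying the cited result and is sound as described.
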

\begin{proof}
It remains to be shown that, for all $k\geq 1$, 
the family $\lfam(\odha{k})$ includes $\lfam(\drpcfa(k))$.
Given some $\drpcfa(k)$ $\mathcal{A}$, basically, the idea of simulating it by
a deterministic one-way $k$-head finite automaton $A'$ is to
track all current states of the components of $\mathcal{A}$ in the current
state of $A'$ (see~\cite{Martinvide:2002:pfascs}).
\end{proof}

Comparing deterministic centralized systems with
non-central\-ized systems we obtain for centralized systems the surprising result that
the returning mode is not weaker than the non-returning mode. Let
$
L_\subtext{rc}= \{\, uc^xv\texttt{\$}uv \mid u,v\in\{a,b\}^*, x\geq 0\,\}.
$
\begin{thm}\label{theo:drcpcfa-not-dpcfa}
The language $L_\subtext{rc}$ belongs to the family $\lfam(\drcpcfa)$ (and thus to
$\lfam(\drpcfa)=\lfam(\dpcfa)$), but not to $\lfam(\dcpcfa)$.
\end{thm}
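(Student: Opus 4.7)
The claim bundles two inclusions: $L_\subtext{rc} \in \lfam(\drcpcfa)$ and $L_\subtext{rc} \notin \lfam(\dcpcfa)$.

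For the positive direction, the plan is to build a two-component $\drcpcfa$ that simulates a two-head one-way deterministic finite automaton. Membership $L_\subtext{rc} \in \lfam(\odha{2})$ is immediate from the algorithm that first drives head two past the separator~$\$$ and then symbol-by-symbol compares the first half against the second half, freezing head two whenever head one traverses the $c^x$-block. The key enabling property of the returning mode is that after every query the non-master is reset to its initial state while its tape head position is preserved. I will design the non-master so that from the initial state it performs a short deterministic ritual: a $\lambda$-step into a \emph{ready} state, a single input read into a \emph{signal} state whose name encodes the symbol just read, and then a $\lambda$-step into a self-looping \emph{paused} state. The master can at any moment either issue a query to read off the current symbol (simultaneously triggering a reset that prepares the non-master to sample the next symbol) or withhold queries and let the non-master rest indefinitely in the paused state while the master advances its own head through the $c^x$-block. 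Combined with the master's own input head, this yields the required two-head simulation.

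For the negative direction, the plan is to argue by contradiction. Suppose $\mathcal{A}$ is a $\dcpcfa$ with $k$ components that accepts $L_\subtext{rc}$. Because $\mathcal{A}$ is both centralized and non-returning, every non-master $A_i$ is an autonomous deterministic one-way finite automaton on the common input tape: it is never reset by a query and never itself issues a query, so its state and head position at each time step depend only on the input and on the number of regular steps executed so far. Restricting attention to inputs of the form $u c^n \$ u$ with $u \in \{a,b\}^m$, I plan to combine a pumping step on the $c^n$-block -- where each non-master's state sequence is ultimately periodic with period at most $|S_i|$ -- with a fooling-set argument over the $2^m$ choices of~$u$. The intended invariant is the tuple of states and head positions of all $k$ components at a syntactically distinguished moment of the computation, such as the step at which the master's own head first consumes the $\$$-symbol. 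This tuple takes at most polynomially many distinct values in $m+n$, so for sufficiently large~$m$ two distinct $u \neq u'$ must collide. Matching the suffix with $u$ then forces $\mathcal{A}$ to deliver the same accept/reject verdict on $u c^n \$ u$ (accepted) and on $u' c^n \$ u$ (to be rejected), the desired contradiction.

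The hard part is unambiguously the second direction. The master's schedule of $\lambda$-moves and queries is input-dependent, so the $k-1$ non-masters may stand at different tape positions for different inputs at any given global step and the naive notion of ``configuration after a common prefix'' does not apply. My plan is to anchor the fooling invariant to an intrinsically identifiable event in the master's own computation, and to verify -- via crossing-sequence style bookkeeping over the autonomous non-master components -- that the subsequent behavior of $\mathcal{A}$ is determined solely by this invariant together with the remaining suffix; only after that calibration is completed does the counting argument convert the collision into the required contradiction.
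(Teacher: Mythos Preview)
The survey states this theorem without proof, attributing it to~\cite{bordihn:2008:ccpcfa}, so there is no in-paper argument to compare against; I comment on the soundness of your plan.

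Your positive direction is essentially right. One small correction: in your non-master ritual, the transition from the \emph{signal} state into a separate self-looping \emph{paused} state discards the symbol just sampled; if the master delays its query by even one step it can no longer recover that symbol. Making the signal state itself self-looping on~$\lambda$ fixes this and simplifies the timing analysis for the $c$-block phase.

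The negative direction has a genuine gap in the swap step. You propose to find $u\neq u'$ whose full configurations (states and head positions of all $k$ components) coincide at the moment the master first consumes~$\$$, on inputs $uc^n\$u$ and $u'c^n\$u'$ respectively, and then to conclude that $u'c^n\$u$ receives the same verdict as $uc^n\$u$. For that conclusion you need \emph{two} conditions to hold simultaneously at the anchor moment: (i)~every head has already left the first $m$ cells, so that the tape content each component still has to traverse is identical on $uc^n\$u$ and on $u'c^n\$u$; and (ii)~no head has yet entered the suffix beyond~$\$$, so that the anchor configuration reached on the hybrid input $u'c^n\$u$ really agrees with the one on $u'c^n\$u'$. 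Your choice of anchor does not guarantee either. An autonomous non-master may still be inside the $u$-block at the anchor time (violating~(i)): unless it is trapped in a $\lambda$-cycle, it will subsequently read $u'[p..m]$ on the hybrid input but $u[p..m]$ on the original, so the futures diverge. Conversely, a non-master that reads at every step while the master performs many $\lambda$-moves and queries can be well past~$\$$ when the master finally arrives there (violating~(ii)): then the anchor configuration already depends on the suffix, and the collision on $u'c^n\$u'$ tells you nothing about $u'c^n\$u$. The periodicity of non-masters inside the $c$-block, which you invoke as the pumping ingredient, does not by itself force all heads into the $c^n\$$ window at a common instant. Your phrase ``crossing-sequence style bookkeeping'' gestures at the missing ingredient---one really does need to control, for each non-master, the state in which its head enters and leaves the $c$-block and to argue separately about permanently stalled components---but as written the invariant you record is not strong enough to determine the outcome on the hybrid input, and the proposal does not yet supply that argument.
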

\begin{cor}\label{cor_gencap1}
$\lfam(\dcpcfa) \subset \lfam(\dpcfa)=\lfam(\drpcfa)$.
\end{cor}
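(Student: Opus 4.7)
The plan is to observe that the corollary is essentially a bookkeeping consequence of the two preceding theorems, combining an equality, a trivial inclusion, and a separating language.

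First I would derive the equality $\lfam(\dpcfa)=\lfam(\drpcfa)$ directly from Theorem~\ref{theo:rpcdfa=pcdfa=kdha} by taking the union over all degrees $k\geq 1$: that theorem gives $\lfam(\drpcfa(k))=\lfam(\dpcfa(k))=\lfam(\odha{k})$ for every fixed~$k$, so the (unrestricted-degree) families on both sides coincide, both equaling $\bigcup_{k\geq 1}\lfam(\odha{k})$.

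Next I would establish the inclusion $\lfam(\dcpcfa)\subseteq\lfam(\dpcfa)$. This step is immediate from the definitions: a deterministic centralized system is already a deterministic (non-centralized) system in which it just happens that only the master component $A_1$ contains query states, i.e., $S_i\cap Q=\emptyset$ for $2\leq i\leq k$. No construction is needed; every $\dcpcfa$ is literally a $\dpcfa$, so each language accepted by a $\dcpcfa$ is accepted by a $\dpcfa$ of the same degree.

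Finally, to upgrade this inclusion to a strict one, I would invoke the separating language $L_{\subtext{rc}}$ from Theorem~\ref{theo:drcpcfa-not-dpcfa}. That theorem provides exactly what is needed: $L_{\subtext{rc}}\in\lfam(\drcpcfa)\subseteq\lfam(\drpcfa)=\lfam(\dpcfa)$, while $L_{\subtext{rc}}\notin\lfam(\dcpcfa)$. Hence the inclusion is proper. I do not anticipate any obstacle here since all three ingredients—the multi-head characterization, the trivial syntactic inclusion of centralized systems into general systems, and the witness language—are already on the table; the only care needed is in the first step to pass from fixed degree $k$ to the union over all degrees and to note that the same $L_{\subtext{rc}}$ witnesses the strictness even after this union is taken (because Theorem~\ref{theo:drcpcfa-not-dpcfa} rules out $L_{\subtext{rc}}$ from $\lfam(\dcpcfa)$ of arbitrary degree).
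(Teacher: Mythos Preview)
Your proposal is correct and matches the paper's approach: the corollary is stated without proof there, being an immediate consequence of Theorem~\ref{theo:rpcdfa=pcdfa=kdha} (for the equality), the trivial syntactic inclusion of centralized into non-centralized systems, and Theorem~\ref{theo:drcpcfa-not-dpcfa} (for the separating witness $L_{\subtext{rc}}$). You have spelled out exactly these ingredients and the passage to arbitrary degree, so nothing is missing.
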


In order to show that nondeterministic centralized systems are strictly more powerful
than their deterministic variants we consider the
language $L_\subtext{mi}= \{\,ww^R \mid w \in \{a,b,c\}^+\,\}$ and
show that its complement belongs to $\lfam(\cpcfa)$, but does not belong to
$\lfam(\dpcfa)$.
\begin{cor}\label{cor_gencap2}
$\lfam(\dcpcfa) \subset \lfam(\cpcfa)$ and 
$\lfam(\dpcfa) \subset \lfam(\pcfa)$.
\end{cor}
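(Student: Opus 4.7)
The plan is to establish both strict inclusions from the single witness language suggested in the preamble. The inclusions $\lfam(\dcpcfa) \subseteq \lfam(\cpcfa)$ and $\lfam(\dpcfa) \subseteq \lfam(\pcfa)$ are immediate from the definitions, so only strictness is at stake. I would take $L_\subtext{mi} = \{\,ww^R \mid w \in \{a,b,c\}^+\,\}$ and prove (a) $\overline{L_\subtext{mi}} \in \lfam(\cpcfa)$, which automatically gives membership in the larger family $\lfam(\pcfa)$, and (b) $\overline{L_\subtext{mi}} \notin \lfam(\dpcfa)$, which then forces $\overline{L_\subtext{mi}} \notin \lfam(\dcpcfa)$ as well, since $\lfam(\dcpcfa) \subseteq \lfam(\dpcfa)$.

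For (a), I would design a centralized nondeterministic system whose master initially guesses a witness for non-palindromicity: either that the input length is odd, or a position $i$ in the first half together with a claimed mismatch between the $i$-th symbol from the left and the $i$-th symbol from the right. A few non-master components advance at different effective speeds, using $\lambda$-transitions to be stalled on selected steps so that at a common moment their reading positions mark the two locations that must be compared. The master then issues communication requests at the decisive time, extracts the two symbols through query states, and drives an accepting transition exactly when they differ. Since only the master queries, centralization is respected.

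For (b), I would invoke Theorem~\ref{theo:rpcdfa=pcdfa=kdha}, which equates $\lfam(\dpcfa)$ with $\bigcup_{k\geq 1}\lfam(\odha{k})$. Each class $\lfam(\odha{k})$ is effectively closed under complementation because the machine is deterministic and always halts, so complementing the accepting-state set suffices. Hence $\overline{L_\subtext{mi}} \in \lfam(\dpcfa)$ would imply $L_\subtext{mi} \in \lfam(\odha{k})$ for some $k$. But $L_\subtext{mi}$ is essentially the mirror language restricted to even-length nonempty words over the three-letter alphabet $\{a,b,c\}$, and the same classical argument that the excerpt invokes to place the mirror language outside every $\lfam(\onha{k})$ also rules it out of every $\lfam(\odha{k}) \subseteq \lfam(\onha{k})$, a contradiction.

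The hard part is the non-acceptance result underlying (b). The standard route is a counting/crossing-sequence argument: on inputs of length $n$ a $k$-head one-way automaton has only polynomially many reachable configurations in $n$, whereas distinguishing the $2^n$ possible prefixes $u$ in inputs of the form $uu^R$ requires $\Omega(n)$ bits of information to be carried across any cut between the two halves; for $n$ large relative to $k$ this yields the contradiction. The three-letter alphabet and even-length restriction only make the counting argument cleaner, so no new ideas are needed beyond the classical proof already alluded to in the excerpt.
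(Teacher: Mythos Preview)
Your proposal is correct and follows exactly the route the paper indicates: use $\overline{L_\subtext{mi}}$ as the single witness, place it in $\lfam(\cpcfa)$ by a nondeterministic guess-and-compare construction, and exclude it from $\lfam(\dpcfa)$ via Theorem~\ref{theo:rpcdfa=pcdfa=kdha}, closure of $\lfam(\odha{k})$ under complementation, and the classical fact that the mirror language lies outside every $\lfam(\onha{k})$. The paper gives only the one-sentence outline preceding the corollary, and your elaboration fills it in faithfully.
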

Finally, we compare the classes under consideration with some well-known 
language families.
\begin{lem} \label{lemma_dcsl}
The family $\lfam(\pcfa)$ is strictly included in~$\textsf{NL}$, hence, in the family of deterministic context-sensitive languages. 
\end{lem}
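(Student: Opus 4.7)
The plan is to establish $\lfam(\pcfa)\subseteq\mathsf{NL}$ by a direct simulation, then show the inclusion is strict via a witness language, and finally deduce containment in the deterministic context-sensitive languages from Savitch's theorem.

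For the inclusion, I would lift the deterministic simulation sketched in the proof of Theorem~\ref{theo:rpcdfa=pcdfa=kdha} to the nondeterministic setting. Given a $\pcfa(k)$ $\mathcal{A}=\langle \Sigma, A_1,\ldots,A_k, Q, \rightend\rangle$ and input $w$ of length~$n$, a nondeterministic log-space Turing machine stores the $k$-tuple of current component states (constant space, since each $|S_i|$ is fixed) together with the $k$ head positions (each taking $O(\log n)$ bits). In every round it guesses the next transitions of the components in non-query states, updates the corresponding head positions, and then resolves the communication phase deterministically by propagating the states of target components into the query-state components, rejecting the branch whenever a cyclic request is detected (which is possible after at most $k$ propagation attempts). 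This uses total work space $O(k\log n)=O(\log n)$, so $L(\mathcal{A})\in\mathsf{NL}$. The same bookkeeping can equivalently be carried out by a $\onha{k}$ that uses one head per component, giving $\lfam(\pcfa(k))\subseteq\lfam(\onha{k})$ and hence $\lfam(\pcfa)\subseteq\bigcup_{k\geq 1}\lfam(\onha{k})\subseteq\bigcup_{k\geq 1}\lfam(\tnha{k})=\mathsf{NL}$.

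For strictness, I would take the mirror language $L=\{\,w\in\{a,b\}^*\mid w=w^R\,\}$, which lies in $\mathsf{L}\subseteq\mathsf{NL}$ (indeed $L\in\lfam(\tdha{2})$, as two-way movement and two heads comfortably compare $w$ to itself from both ends). As already recalled in the excerpt, $L$ is not accepted by any $\onha{k}$; combined with the inclusion $\lfam(\pcfa)\subseteq\bigcup_{k\geq 1}\lfam(\onha{k})$ established above, this gives $L\notin\lfam(\pcfa)$, proving the strict inclusion. The remaining claim is immediate from Savitch's theorem, which yields $\mathsf{NL}\subseteq\mathsf{DSPACE}(\log^2 n)\subseteq\mathsf{DSPACE}(n)$, the latter being precisely the family of deterministic context-sensitive languages; strictness there is inherited since the witness $L$ belongs to it.

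The main obstacle is making the simulation respect one-way head movement and handle cyclic communication cleanly, but both are already present in the deterministic construction of Theorem~\ref{theo:rpcdfa=pcdfa=kdha} and transfer verbatim to the nondeterministic case: communication phases alter only states (not head positions), and any cycle among at most $k$ query states is detectable in $k$ propagation steps before committing to a successor configuration.
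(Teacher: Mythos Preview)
Your proof is correct and follows essentially the same route as the paper: establish $\lfam(\pcfa)\subseteq\bigcup_{k}\lfam(\onha{k})\subseteq\mathsf{NL}$ via the component-by-head simulation, then separate with a palindrome language that lies in $\mathsf{NL}$ (via a two-way two-head machine) but outside every $\lfam(\onha{k})$. The only cosmetic differences are that the paper's proof is much terser---it takes the inclusion for granted from~\cite{Martinvide:2002:pfascs} and does not spell out the Savitch step---and that it uses the variant $\{\,ww^R\mid w\in\{a,b,c\}^+\,\}$ rather than the full mirror language over $\{a,b\}$ as the separating witness; both choices work for the same reason.
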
 
\begin{proof}
Since $L=\{\,ww^R \mid w \in \{a,b,c\}^+\,\}$ can be accepted by some 
$\tnha{2}$, language~$L$ belongs to $\mathsf{NL}$.
On the other hand, language $L$ does not belong to $\lfam(\onha{k})$, for all $k
\geq 1$.
\end{proof}
\begin{lem}\label{lemma_dlin}
All language classes accepted by parallel communicating finite automata
systems are incomparable to the class of (deterministic) (linear) context-free languages.
\end{lem}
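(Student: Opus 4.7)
The strategy is to exploit the lattice structure of both hierarchies: it suffices to exhibit one language in the weakest PCFA class that lies outside CFL (the largest CFL-variant), together with one language in deterministic linear CFL (the smallest CFL-variant) that lies outside $\lfam(\pcfa)$ (the strongest PCFA class). These two witnesses then yield incomparability between every PCFA class and every variant of (deterministic) (linear) context-free languages, by monotonicity of the chains $\lfam(\dcpcfa)\subseteq\cdots\subseteq\lfam(\pcfa)$ and DLINCFL~$\subseteq\cdots\subseteq$~CFL.

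For the first direction, the copy language $L_{\text{copy}}=\{w\texttt{\$}w \mid w\in\{a,b\}^+\}$ serves perfectly: it was already shown in the example preceding this lemma to lie in $\lfam(\dcpcfa)$, and a standard pumping argument applied to inputs of the form $a^n b^n \texttt{\$} a^n b^n$ shows that $L_{\text{copy}}$ is not context-free. Hence $L_{\text{copy}}$ belongs to every PCFA family but to none of CFL, DCFL, LINCFL, DLINCFL.

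For the second direction, Theorem~\ref{theo:rpcdfa=pcdfa=kdha} together with the analogous construction for nondeterministic systems identifies $\lfam(\pcfa)=\bigcup_{k\geq 1}\lfam(\onha{k})$, so it suffices to produce a deterministic linear context-free language outside every $\lfam(\onha{k})$. I would use the marked mirror language $L'=\{wcw^R \mid w\in\{a,b\}^*\}$, which is accepted by a deterministic pushdown automaton (push while reading $w$, pop and match after seeing $c$) and generated by the linear grammar $S\to aSa \mid bSb \mid c$. To show that $L'\notin\lfam(\onha{k})$ for every $k$, I would adapt the lower bound invoked in the proof of Lemma~\ref{lemma_dcsl} for the unmarked mirror language: matching positions $i$ and $2n+2-i$ of the input requires pairs of heads to traverse in opposite directions, which is incompatible with purely rightward motion, regardless of whether a center symbol separates the two halves.

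The main obstacle is this last adaptation: the paper only cites the one-way lower bound for the unmarked mirror language $\{ww^R\}$, and rigorously transferring it to the marked variant $\{wcw^R\}$ requires replaying the underlying crossing-sequence / head-trajectory counting argument with the center marker explicitly accommodated, to confirm that the marker confers no asymptotic advantage on a one-way multi-head device. Intuitively this is clear — one head can at best locate $c$, which still leaves the essential left–right matching obstruction untouched — but turning this intuition into a proof is the nontrivial step.
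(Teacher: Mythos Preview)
Your approach is essentially identical to the paper's: the paper uses exactly the same two witnesses, the copy language $\{w\dollar w \mid w\in\{a,b\}^+\}$ and the marked palindrome language $\{w\dollar w^R \mid w\in\{a,b\}^+\}$ (with $\dollar$ in place of your~$c$), and argues in the same two directions.

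Two remarks. First, the step you single out as the ``main obstacle'' is not treated as one in the paper: it simply asserts that the marked palindrome language is not accepted by any $\onha{k}$, on the same footing as the unmarked case cited earlier; no adaptation of the unmarked lower bound is carried out. Second, your claim that $\lfam(\dcpcfa)$ is the least element of the PCFA lattice, so that membership there propagates to ``every PCFA family'', is not quite justified by what the paper proves: Corollary~\ref{cor_gencap1} gives $\lfam(\dcpcfa)\subset\lfam(\dpcfa)$, but no inclusion $\lfam(\dcpcfa)\subseteq\lfam(\drcpcfa)$ is stated (Theorem~\ref{theo:drcpcfa-not-dpcfa} even exhibits a language in $\lfam(\drcpcfa)\setminus\lfam(\dcpcfa)$, and the reverse inclusion is left open). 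Accordingly, the paper's proof explicitly records that the copy language lies in both $\lfam(\dcpcfa)$ and $\lfam(\drcpcfa)$, so that both minimal deterministic families are covered.
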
 
\begin{proof}
The language $\{\,w \texttt{\$} w^R \mid w \in \{a,b\}^+\,\}$ of marked
palindromes of even lengths is deterministic linear context free, but is not
accepted by any $\onha{k}$, for all $k \geq 1$. Thus, 
the language $\{\,w \texttt{\$} w^R \mid w \in \{a,b\}^+\,\}$ does not belong to
$\lfam(\pcfa)$. 
Conversely,
the set $\{\,w \texttt{\$} w \mid w \in \{a,b\}^+\,\}$ belongs to
$\lfam(\drcpcfa)$ as well as to $\lfam(\dcpcfa)$, but is not
context free.
\end{proof}
\begin{lem}\label{lemma_crl}
All language classes accepted by parallel communicating finite automata
systems are incomparable with the class of Church-Rosser languages.
\end{lem}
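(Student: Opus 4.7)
The plan is to prove incomparability by exhibiting a separating language in each direction, following the pattern of the preceding two lemmas. Concretely I would construct one language in $\mathsf{CRL}\setminus\lfam(\pcfa)$ and one in $\lfam(\pcfa)\setminus\mathsf{CRL}$, where $\mathsf{CRL}$ denotes the class of Church-Rosser languages.

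For the side $\mathsf{CRL}\not\subseteq\lfam(\pcfa)$ I would reuse $L_{\mathrm{mp}}=\{w\texttt{\$}w^R\mid w\in\{a,b\}^+\}$ already employed in Lemma~\ref{lemma_dlin}. That lemma directly gives $L_{\mathrm{mp}}\notin\lfam(\pcfa)$, arguing from the fact that $\{ww^R\}$ is not accepted by any $\onha{k}$ together with the characterization $\lfam(\pcfa)=\bigcup_k\lfam(\odha{k})$ from Theorem~\ref{theo:rpcdfa=pcdfa=kdha}. On the other hand, $L_{\mathrm{mp}}$ is deterministic linear context-free and therefore deterministic context-free; by the classical containment $\mathsf{DCFL}\subseteq\mathsf{CRL}$ of McNaughton, Narendran and Otto it is a Church-Rosser language. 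This half of the proof is essentially for free.

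For the side $\lfam(\pcfa)\not\subseteq\mathsf{CRL}$ the natural candidate is the copy language $L_{\mathrm{cp}}=\{w\texttt{\$}w\mid w\in\{a,b\}^+\}$, since the running example of this subsection already realizes it in $\lfam(\dcpcfa)\subseteq\lfam(\pcfa)$. The proof must then show $L_{\mathrm{cp}}\notin\mathsf{CRL}$. The approach I would take is to appeal to a combinatorial obstruction for Church-Rosser languages: every language in $\mathsf{CRL}$ is accepted by a finite confluent length-reducing string-rewriting system over some terminal-augmented alphabet, and such a system can only relate two occurrences of the same long block through rewrite rules of bounded width, so it cannot enforce identity between two arbitrarily far separated copies of a word drawn from a sufficiently rich combinatorial family. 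Formalizing this as a local interchange / pumping argument on rewriting sequences finishes this half.

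The main obstacle is, predictably, the non-membership $L_{\mathrm{cp}}\notin\mathsf{CRL}$. Unlike membership in $\lfam(\pcfa)$, which is read off from the explicit small construction of the example, excluding all possible Church-Rosser systems requires handling an infinite family of rule sets simultaneously, and the combinatorial lemmas available for $\mathsf{CRL}$ are subtler than ordinary context-free pumping. If the direct rewriting-system analysis resists me, I would fall back on an off-the-shelf non-CRL witness from the literature whose acceptance by some $\pcfa$ is routine—any such language will do, since the structure of the two directions is the only thing that matters, and both witnesses can then be plugged into the same boiler-plate argument used already for Lemmas~\ref{lemma_dcsl} and~\ref{lemma_dlin}.
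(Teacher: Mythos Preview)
The paper states this lemma without proof, so there is no explicit argument to compare against. Your plan is the natural one and is consistent with what the paper's citations suggest: the references to McNaughton--Narendran--Otto \cite{McNaughton:1988:crtsfl} and Jurdzi\'nski \cite{jurdzinski:2006:bcgcsl} are precisely the two external ingredients needed for your two witnesses. In particular, your ``fallback'' for the hard direction is not a fallback but the intended route: the non-membership of $\{w\texttt{\$}w\mid w\in\{a,b\}^+\}$ in $\mathsf{CRL}$ follows from the fact that this language is not even growing context-sensitive (and $\mathsf{CRL}\subseteq\mathsf{GCSL}$), which is the circle of results around the Jurdzi\'nski reference. Attempting a direct confluence/length-reduction argument from scratch, as you sketch, would be substantially harder and is not what is expected here.

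Two small remarks. First, Theorem~\ref{theo:rpcdfa=pcdfa=kdha} characterizes only the \emph{deterministic} classes; the inclusion $\lfam(\pcfa)\subseteq\bigcup_k\lfam(\onha{k})$ you implicitly invoke for the nondeterministic side comes from the same source \cite{Martinvide:2002:pfascs} but is not the theorem you name. This is harmless, since Lemma~\ref{lemma_dlin} already gives $L_{\mathrm{mp}}\notin\lfam(\pcfa)$ outright and you need not reargue it. Second, because the lemma asserts incomparability for \emph{all} PCFA classes, the witness on the $\lfam(\pcfa)\setminus\mathsf{CRL}$ side must lie in the smallest ones; the copy language does, since Lemma~\ref{lemma_dlin} records it in both $\lfam(\drcpcfa)$ and $\lfam(\dcpcfa)$, so your choice is right for this reason as well.
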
 
\bibliographystyle{eptcs}

\end{document}